\newtheorem{theorem}{Theorem}[section]
\newtheorem{lemma}[theorem]{Lemma}
\newtheorem{corollary}[theorem]{Corollary}
\newtheorem{claim}[theorem]{Claim}
\newtheorem{obs}[theorem]{Observation}
\newtheorem{definition}[theorem]{Definition}
\newcommand{\ignore}[1]{}
\newcommand{\dsum}{\displaystyle\sum}
\newcommand{\expect}{\mathbb{E}}
\newcommand{\eps}{\epsilon}
\newcommand{\mbR}{\mathbb{R}}
\newcommand{\M}{\mathcal{M}}
\newcommand{\I}{\mathcal{I}}
\newcommand{\St}{\mathcal{S}}
\newcommand{\zx}{\mu}
\newcommand{\x}{x}
\newcommand{\win}{\operatorname{win}}
\newcommand{\Gr}{\textsc{Greedy}}
\begin{document}

\title{ The Secretary Returns}
\author{   Shai Vardi \thanks{Blavatnik School of Computer Science, 
Tel Aviv University.
 E-mail: {\tt  shaivar1@post.tau.ac.il}. This research was supported in part by the Google Europe Fellowship in Game Theory.}
 }
\date{}
\maketitle
\begin{abstract}

In the online random-arrival model, an algorithm receives a sequence of $n$ requests that arrive in a random order. The algorithm is expected to make an irrevocable decision with regard to each request based only on the observed history. We consider the following natural extension of this model:  each request arrives $k$ times, and the arrival order is a random permutation of the $kn$ arrivals; the algorithm is expected to make a decision regarding each request only upon its last arrival. We focus primarily on the case when $k=2$, which can also be interpreted as each request arriving at,  and departing from the system, at a random time.

We examine the secretary problem: the problem of selecting the best secretary when the secretaries are presented online according to a random permutation. We show that when each secretary arrives twice, we can achieve a competitive ratio of $\sim0.768$ (compared to $1/e$ in the classical secretary problem), and that it is optimal. We also show  that without any knowledge about the number of secretaries or their arrival times, we can still hire the best secretary with probability at least $2/3$, in contrast to the impossibility of achieving a constant success probability in the classical setting. 

We extend our results to the matroid secretary problem, introduced by Babaioff et al. \cite{BIK07}, and show a simple algorithm that achieves a $2$-approximation to the maximal weighted basis in the new model (for $k=2$). We show that this approximation factor can be improved in special cases of the matroid secretary problem; in particular, we give a $16/9$-competitive algorithm for the returning edge-weighted bipartite matching problem.
\end{abstract}

\section{Introduction}



The \emph{secretary problem} \cite{Lindley61,Dynkin63} is the following: $n$ random items are presented to an observer in random order, with each of the $n!$ permutations being equally likely. There is complete preference order over the items, which the observer is able to query, for the items he\footnote{We use male pronouns throughout this paper for simplicity. No assumption on the genders of actual agents is intended.} has seen so far. As each item is presented, the observer must either accept it, at which point the process ends, or reject it, and then it is lost forever. The goal of the observer is to maximize the probability that he chooses the ``best'' item (i.e., the one ranked first in the preference order). This problem models many scenarios; one such scenario is the one for which the problem  is named: $n$ secretaries arrive one at a time, and an irrevocable decision whether to accept or reject each secretary is made upon arrival. Another is the house-selling problem, in which buyers arrive and bid for the house, and the seller would like to accept the highest offer.
An alternative way of modeling this problem is the following. Each secretary is allocated, independently and uniformly at random, a real number $r \in [0,1]$, which represents his arrival time. As before, the seller sees the secretaries in the order of arrival and must make an irrevocable decision before seeing the next secretary. It is easy to see that the two models are essentially equivalent (assuming $n$ is known, see e.g., \cite{Bruss84}) 
- the arrival times define a permutation over the secretaries, with each permutation being equally likely.
The optimal solution for the classical secretary problem is well known - wait until approximately $n/e$ secretaries have passed\footnote{The exact number for each $n$ can be computed by dynamic programming, see e.g., \cite{GM66}.} (alternatively until time $t=1/e$), and thereafter, accept a secretary if and only if he is the best out of all secretaries observed so far (e.g., \cite{GM66,Bruss84}). This gives a probability of success 
 of at least $1/e$. 

Consider the following generalization of the secretary problem: Assume that each secretary arrives $k$ times, and the seller has to make a decision upon each secretary's last arrival. 
 We model this  as follows: Allocate each secretary $k$ numbers, independently and uniformly at random from $[0,1)$, which represent his $k$ arrival times. (Equivalently, we may consider only the order of arrivals; in this case each of the $(kn)!$ permutations over the arrival events is equally likely.) A decision whether to accept or reject a secretary must be made between his first and last arrival. We call this problem the \emph{$(k-1)$-returning secretary problem}. The secretary problem is a classical example of the random-arrival online model (e.g., \cite{BMM12, MY11}), and our model immediately applies to this more general framework,
capturing several natural variations thereof, for example:
\begin{enumerate}
\item Requests may not require (or expect) an immediate answer and  will therefore visit the system several times to query it.
\item When requests arrive, the system  gives them either a rejection or an acceptance, or an invitation to return at some later time. It turns out that in many cases, very few requests actually need to return; in the secretary problem, for example, a straightforward analysis shows that the optimal algorithm will only ask $O(\log{n})$ secretaries to return.\footnote{The algorithm will only ask the $i^{th}$ secretary that arrives to return if he is the best out of all the secretaries it has seen thus far. The probability of this is $1/i$. Summing over all secretaries gives the bound.}
\item Requests may enter the system and leave at some later time. The time the request stays in the system can vary from ``until just before the next item arrives'', in which case no information is gained, to ``until the end'', in which case the problem reduces to an offline one. Clearly we would like something in between. When $k=2$, the second random variable allocated to the query can be interpreted  as the time that the query leaves the system, giving a natural formulation  of this property in the spirit of the random-arrival online model.
\end{enumerate}

\subsection{Our Results}
 When each secretary returns once (i.e., $k=2$), we show that the optimal solution has a similar flavor to that of the classical secretary problem - wait until some fraction of the secretaries have passed (ignoring how many times each secretary has arrived), and thereafter hire the best secretary (out of those we have seen so far), upon his second arrival. To tightly bound the probability of success (for large $n$), we examine the case when each of the $2n$ arrival times is selected uniformly at random from $[0,1)$.  We use this model to show that  the success probability tends to $0.76797$ as $n$ grows. In the classical secretary problem, it is essential to know the number of secretaries arriving in order to achieve a constant success probability. We consider the case when $n$ is not known in advance (and there is no extra  knowledge, such as arrival time distribution), and show that by choosing the best secretary we have seen once he returns, with no waiting period, we can still obtain a success probability of at least $2/3$. 
 We also consider cases when $k >2$: we show that for $k=3$, we can achieve a success probability of at least $0.9$, even without knowledge of $n$, and show that setting $k=\Theta(\log{n})$ guarantees success with arbitrarily high probability ($1-\frac{1}{n^{\alpha}}$ for any $\alpha$).
 
We extend our results to    the \emph{matroid secretary problem}, introduced by Babaioff et al., \cite{BIK07}, which is an adaptation of the classical secretary problem to the domain of  \emph{weighted matroids}. A weighted matroid is a pair $\M=(E,\I)$ of elements $E$ and independent sets $\I$, and a weight function $w:E\rightarrow \mbR$, which obeys the properties of heredity and exchange (see Section~\ref{sec:model} for a formal definition).  In the matroid secretary problem, the elements of a weighted matroid are presented in  random order to the online algorithm. The algorithm maintains a set $S$ of selected elements; when an element $e$ arrives, the algorithm must decide whether to add it to $S$, under the restriction that $S \cup \{e\}$ is an independent set of the matroid. The algorithm's goal is to maximize the sum of the weights of the items in $S$. It is currently unknown whether there exists an algorithm that can find a set whose expected weight is a constant fraction of the optimal offline solution. The best result to date is an  $O(\sqrt{\log{\rho}})$-competitive algorithm,\footnote{An online algorithm whose output is within a factor $c$ of the optimal offline output is said to be $c$-competitive; see Section~\ref{sec:matroid} for a formal definition.} where $\rho$ is the rank of the matroid, due to Chakraborty and Lachish \cite{CL12}. We show that in the returning online model, there is an algorithm which is $2$-competitive in expectation (independent of the rank). We also show that for bipartite edge-weighted matching, and hence for transversal matroids\footnote{Transversal matroids (see Section~\ref{sec:ebom} for a definition) are a special case of bipartite edge-weighted matching.} in general, this result can be improved, and show a $16/9$-competitive algorithm.



\subsection{Related Work}
\label{subsec:related}

The origin of the secretary problem is still being debated: the problem first appeared in print in 1960 \cite{Gardner60}; its solution is often credited to Lindley \cite{Lindley61} or Dynkin \cite{Dynkin63}. Hundreds of papers have been published on the secretary problem and variations thereof; for a review, see  \cite{Freeman83}; for a historical discussion, see \cite{Ferguson89}. 
 Kleinberg \cite{Kleinberg05} introduced a version of the secretary problem in which we are allowed to choose $k$ elements, with the goal of maximizing their sum. He gave a $1-O(\sqrt{1/k})$-competitive algorithm, and showed that this setting applies to strategy-proof online auction mechanisms. 

The \emph{matroid secretary problem} was introduced by Babaioff et al., \cite{BIK07}. They gave a $\log{\rho}$-competitive algorithm for general matroids, where $\rho$ is the rank of the matroid, and   several constant-competitive algorithms for special cases of the matroid secretary problem.  Chakraborty and Lachish \cite{CL12} gave an $O(\sqrt{\log{\rho}})$ algorithm for the matroid secretary problem. There have been several improvements on special cases since then. Babaioff et al., \cite{BDGIT09} gave algorithms for the discounted and weighted secretary problems; Korula and P{\'a}l  \cite{KorulaP09} showed that \emph{graphic matroids}\footnote{In a \emph{graphic matroid} $G=(V,E)$, the elements are the edges of the graph $G$ and a set is independent if it does not contain a cycle.} admit $2e$-competitive algorithms;
 Kesselheim et al., \cite{KRTV13} gave a $1/e$-competitive algorithm for the secretary problem on transversal matroids and showed that this is optimal.   Soto \cite{Soto13} gave a $2e^2/(e-1)$-competitive algorithm when the adversary can choose the set of weights of the elements, but the weights are assigned at random to the elements, (and the elements are presented in a random order). Gharan and Vondr{\'a}k \cite{GharanV13} showed that once the weights are random, the ordering can be made adversarial, and that this setting still admits $O(1)$-competitive algorithms.
There have been other interesting results in this field; for a recent survey, see  \cite{Dinitz13}.

\subsection{Comparison with Related Models}
There are several other ther papers which consider online models with arrival and departure times. Due to the surge in interest in algorithmic game theory over the past 15 years, and the economic implications of the topic, it is unsurprising that many of these papers are economically motivated. 
 Hajiaghayi et al., \cite{HKP04}, consider the case of an auction in which an auctioneer has $k$ goods to sell and the buyers arrive and depart dynamically. They notice and make use of the connection to the secretary problem to design strategy-proof mechanisms: they design an $e$-competititive (w.r.t. efficiency) strategy-proof mechanism for the case $k=1$, which corresponds to the secretary problem, and extend the results to obtain $O(1)$-cometitive mechanisms for $k>1$. Hajiaghayi et al., \cite{HKMP05}, design strategy-proof mechanisms for online scheduling in which agents bid for access to a re-usable resource such as processor time or wireless network access, and each agent is assumed to arrive and depart dynamically.  Blum et al., \cite{BSZ06}, consider online auctions, in which a single commodity is bought by multiple buyers and sellers whose bids arrive and expire at different times. They present an $O(\log{(p_{max}-p_{min})})$-competitive algorithm for maximizing profit and an $O(\log({p_{max}/p_{min}}))$-competitive algorithm for  maximizing volume where the bids are in the range $[p_{min}, p_{max}]$, and a strategy-proof algorithm for maximizing social welfare. They also show that their algorithms achieve almost optimal competitive ratios. Bredin and Parkes \cite{BP12} consider online \emph{double auctions}, which are matching problems with incentives, where agents arrive and depart dynamically. They show how to design strategy-proof mechanisms for this setting.

 \subsection{Paper Organization}
 In Section~\ref{sec:model} we introduce our model. In Section~\ref{sec:classic_return} we provide an optimal algorithm for the returning secretary problem. In Section~\ref{sec:matroid} we give a $2$-competitive algorithm for the returning matroid secretary problem, and in Section~\ref{sec:ebom}, we show we can improve this competitive ratio to $16/9$ for transversal matroids (and more generally, returning edge-weighted bipartite matching). In Appendices~\ref{app:3sec} and~\ref{app:logn}, we analyze the cases of the $k$-returning secretary problem for $k=3$ and $k = \Theta(\log{n})$.

\section{Model and Preliminaries}
\label{sec:model}
Consider the following scenario. There are $n$ items which arrive in an online fashion, and each item arrives $k$ times. Each arrival of an item is  called a \emph{round}; there are  $kn$ rounds. The order of arrivals is selected uniformly at random from the $(kn)!$ possible permutations. An algorithm observes the items as they arrive, and must make an irrevocable decision about each item upon the item's last appearance. We call such an algorithm a $(k-1)$-\emph{returning online algorithm} and the problem it solves a $(k-1)$-\emph{returning online problem}. Because the problem is most natural when $k=2$, for the rest of the paper, (up to and including Appendix~\ref{sec:ebom}), we assume that $k=2$ (and instead of  ``$1$-returning'', we simply say ``returning''.) In Appendices~\ref{app:3sec} and~\ref{app:logn} we consider scenarios when $k>2$.

 We use the following definition of matroids:
\begin{definition}
A  matroid $\M=(E, \I)$ is an ordered pair, where $E$ is a finite set of elements (called the \emph{ground set}), and $\I$ is a family of subsets of $E$,  (called the \emph{independent sets}), which satisfies the following properties:
\begin{enumerate}[ noitemsep]
\item $\emptyset \in \I$,
\item If $X \in \I$ and $Y \subseteq X$ then $Y \in \I$,\label{prop2}
\item If $X,Y \in \I$ and $|Y|< |X|$ then there is an element $e\in X$ such that $Y \cup \{e\} \in \I$.\label{prop3}
\end{enumerate}
\end{definition}
Property~\eqref{prop2} is called the \emph{hereditary property}. Property~\eqref{prop3} is called the \emph{exchange property}. An independent set that becomes dependent upon adding any element of $E$ is called a \emph{basis} for the matroid.
In a \emph{weighted matroid}, each element $e \in E$ is associated with a weight $w(e)$.
The \emph{returning matroid secretary problem} is the  following: Each element of a weighted matroid $\M=(E,\I)$ arrives twice, in an order selected uniformly at random out of the $(2n)!$ possible permutations of arrivals. The algorithm maintains a set of selected elements, $S$, and may add any element to $S$ at any time between (and including) the first and second appearances of the element, as long as $S \cup \{e\} \in \I$. The goal of the algorithm is to maximize the sum of the weights of the elements in $S$. The success of the algorithm is defined by its \emph{competitive ratio}.
\begin{definition}[competitive ratio, $c$-competitive algorithm]
If the weight of a maximal-weight basis of a matroid is at most $c$ times the  expected weight of the set selected by an algorithm (where the expectation is over the arrival order), the algorithm is said to be $c$-\emph{competitive}, and its \emph{competitive ratio} is said to be $c$.
\end{definition}
A special case of the returning matroid secretary problem is the \emph{returning secretary problem}, in which there are $n$ secretaries, each of whom arrives twice. The goal of the algorithm is to identify the best secretary. The algorithm is \emph{successful} if and only if it chooses the best secretary, and we quantify how ``good'' the algorithm is by its success probability.

Without loss of generality, we assume throughout this paper that the weights of all the elements are distinct.\footnote{Babaioff et al., \cite{BIK07} show that we do not lose generality by this assumption in the matroid secretary problem. The result immediately applies to our model.} Although we do not discuss computational efficiency in this work, all the algorithms in this paper are polynomial in the succinct representation of the matroid. 

We denote the set $\{1, 2, \ldots n\}$ by $[n]$.

\section{The Returning Secretary}
\label{sec:classic_return}

Assume that there are $n$ secretaries that arrive in an online fashion.  Each secretary arrives twice, and the order is selected uniformly at random from the $(2n)!$ possible orders.  At all times, we keep note of who the best secretary is out of all the secretaries seen so far. We call this secretary the \emph{candidate}. That is, in each round,  if the secretary that arrived is better than all other secretaries that arrived before this round, he becomes the candidate. Note that it is possible that in a given round,  the candidate will have already arrived twice. At any point between each  secretary's first and second arrival, we can  \emph{accept} or \emph{reject} him; an acceptance is final, a rejection is only final if made upon the second arrival. Once we accept a secretary, the process ends. 
We \emph{win} if we accept (or \emph{choose}) the best secretary. We would like to maximize the probability of winning.
 
\subsection{Optimal Family of Rules}
 
What is the best strategy for maximizing the probability of winning? We first show that the optimal rule must be taken from the family of stopping rules as described in the following lemma.

\begin{lemma}\label{lemma:zx}
The optimal strategy for choosing the best secretary in the returning secretary problem has the following structure: wait until $d$ distinct secretaries have arrived; thereafter,  accept the best secretary out of the secretaries seen so far, when he returns. 
\end{lemma}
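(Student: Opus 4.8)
The plan is to establish the lemma via a standard exchange/optimality argument, showing that any optimal strategy can be assumed to act only on ``candidates'' (current running maxima) and only at their second arrival, and that the acceptance region is an up-set determined by the number of distinct secretaries seen so far. First I would argue that the state of the process relevant to the decision-maker, just before a given round, is fully captured by two quantities: the number $t$ of distinct secretaries that have arrived so far, and whether the secretary arriving in the current round is the current candidate (i.e.\ the best among those seen) and whether this is his first or second arrival. This is because the arrival order is a uniform random permutation of $2n$ items, so by symmetry the identity of which particular secretaries have arrived is irrelevant; only their relative ranks and the count matter. In particular, the conditional probability that the current candidate is the global best, given that $t$ distinct secretaries have arrived, depends only on $t$ (and $n$), not on the history of how the arrivals interleaved.

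Next I would show that it is never beneficial to accept a secretary who is not the current candidate: if at some round the algorithm accepts a non-candidate $s$, then since a strictly better secretary has already appeared, the probability of winning from that action is $0$, whereas rejecting (or waiting) gives a nonnegative probability of eventually accepting the best secretary. Hence we may restrict to strategies that only ever accept the current candidate. Then I would argue we may as well defer acceptance to the candidate's \emph{second} arrival: since a rejection is only irrevocable at the second arrival, accepting a candidate at his first arrival versus waiting and accepting at his second arrival has the same win probability \emph{conditioned on} the candidate remaining the candidate until then, but waiting strictly dominates because it gives the option to abandon him if a better secretary arrives in between. So the only real decisions are, at the second arrival of a candidate who is still the candidate, whether to accept; and the state at that moment is summarized by $t$, the number of distinct secretaries seen.

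Finally I would show the optimal acceptance set in the variable $t$ is an interval $[d, n]$, i.e.\ a threshold rule. For this I would compare, for each $t$, the value of accepting the candidate at his second arrival when $t$ distinct secretaries have been seen (which equals the probability that this candidate is the global best, call it $p(t)$, a quantity increasing in $t$) against the value of rejecting and continuing optimally. A monotonicity/backward-induction argument shows that the continuation value is nonincreasing (or at least that once $p(t)$ exceeds the continuation value it stays above it), because $p(t)$ is increasing in $t$ while the continuation value can only decrease as more secretaries are consumed; hence the set of $t$ for which accepting is optimal is upward closed, giving the threshold $d$.

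The main obstacle I expect is the last step: making the ``accept iff $t \ge d$'' structure rigorous requires care with the continuation value, because after waiting past a candidate's second arrival the process may present further candidates at various later values of $t$, so the continuation value is itself defined by the same class of stopping rules, and one must verify the single-crossing property between $p(t)$ and this continuation value rather than merely asserting monotonicity of each side. I would handle this by backward induction on $t$ from $t = n$ downwards, maintaining the invariant that the optimal continuation policy from state $t$ is a threshold policy, and using the fact that $p(t)$ is strictly increasing in $t$ to conclude that the crossing point is unique; the base case $t = n$ (accept, since $p(n) = 1$) is immediate, and the inductive step reduces to comparing $p(t)$ with a convex combination of $p(t')$ for $t' > t$ and the value of never accepting, which is dominated once $t$ is large enough.
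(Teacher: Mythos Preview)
Your proposal is correct and follows essentially the same approach as the paper: reduce to accepting only the current candidate, only at his second arrival, then show the accept/reject decision depends only on the number $d$ of distinct secretaries seen, with the accept value $p(d)=d/n$ increasing and the continuation value $f(d)$ non-increasing, yielding a threshold. The one place where the paper is slicker is the step you flag as the main obstacle: rather than backward induction to get single-crossing, the paper observes directly that $f(d)\ge f(d+1)$ because from the state with $d$ distinct secretaries one admissible strategy is simply to reject the next distinct secretary as well, which realizes $f(d+1)$; combined with the strict increase of $d/n$, this immediately gives a unique threshold $d^*$.
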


\begin{proof} Without loss of generality, we can restrict our attention to strategies that make decisions regarding a secretary $s$ only upon  $s$'s arrivals, as every strategy that makes decisions between the two arrival times has an equivalent strategy that defers the decision making to the second arrival.
Let $d_r$ be the random variable denoting the number of distinct secretaries that have arrived up to (and including) round $r$ ($r \in [2n]$). Denote by $H(r) = \{x_1, x_2, \ldots, x_{r-1}\}$ the history at round $r$, where $x_i = (y_i, z_i)$: $y_i$ is the relative rank (among the secretaries that have arrived until now) of the secretary that arrived at time $i$, and $z_i$ represents whether this is the first or second time that this secretary has arrived (i.e. $y_i \in [d_r]$, $z_i \in \{1,2\}$). Any (deterministic) strategy $\St$ must have the following structure: for every realization of $x_r = (y_r, z_r)$, and $H(r)$,  $\St$ must accept or reject. That is $S: (H_r, y_r, z_r) \rightarrow \{\mathrm{accept, reject}\}$. Denote the optimal strategy by $\St^*$. Clearly,
\begin{enumerate}[noitemsep, nolistsep]
\item If the $t^{th}$ secretary is not the best, we will not choose him: $\forall y_r \neq 1$, $\St^*(H_r, y_r, z_r) = \mathrm{ reject}$.
\item If this is the first time we have seen a secretary, we cannot gain anything by choosing him now. It is better to wait for the second arrival, as we lose nothing by waiting: $\St^* (H_r, y_r, 1)= \mathrm{reject}$.
\end{enumerate} 
Therefore, we only need to consider choosing the best secretary we have seen so far when he returns; i.e., we only accept at time $t$ such that $y_r = 1, z_r =2$. For all other values of $y_i$ and $z_i$, $\St^*$ must reject; henceforth, we only focus on the case that $y_r = 1, z_r =2$, and omit this from the notation.
Denote the event that $\St^*$ accepts on history $H_r$ by $Acc(H_r)$.
As $\St^*$ is a probability-maximizing strategy,  
\begin{equation}
\St^* (H_r) = \mathrm{accept} \iff \Pr[\win|Acc(H_r)] \geq \Pr[\win|\neg Acc(H_r)].\label{inequ}
\end{equation}

Given that $d_r = d$, $\Pr[\win|Acc(H_r)] = d/n$, as this is exactly the probability that the best secretary is part of a group of $d$ secretaries selected uniformly at random. 
Although we cannot give such an elegant formula for $\Pr[\win|\neg Acc(H_r)]$, we know that it is the probability of winning given that we have seen $d$ secretaries, rejected them all, and have $(n-d)$ secretaries remaining to observe; hence, the probability is dependent only on $d$ (as $n$ is fixed). Denote this probability function by $f(d)$. We do not attempt to describe $f$, other than to say that $f$ must be non-increasing in $d$. (This is easy to see: $f(d)\geq f(d+1)$ as a possible strategy is to always reject the $d^{th}$ secretary.)

As the left side of~\eqref{inequ}  is an increasing function of $d$, and the right side is a decreasing function  of $d$ (and as $\St^*$ is a probability-maximizing function), $\St^*$ will accept only if the number of distinct secretaries that have arrived is at least $d^*$, the minimal $d$ such that $d/n \geq f(d)$. We can conclude that the optimal strategy is to observe  the first $d^*$ secretaries without hiring any and to choose the first suitable secretary thereafter. It is easy to see (similarly to \cite{Bruss2000}), that randomization cannot lead to a better stopping rule.
\end{proof}

From Lemma~\ref{lemma:zx}, we can conclude that there is some function $f:n\rightarrow [0,n]$ for which the optimal  algorithm for the returning secretary problem is Algorithm~\ref{alg:sec}. 

\vspace{10pt}
\begin{algorithm}[H]
\caption{Returning secretary algorithm with function $f$}\label{alg:sec}
\SetKwInOut{Input}{Input}\SetKwInOut{Output}{Output}

\Input{$n$, the number of secretaries }
\Output{A secretary $s$}
\BlankLine
the Candidate = $\emptyset$\;

\For{round $r=1$ to $2n$}{Let $i_r$ be the secretary that arrives on round $r$\;
Denote by $d_r$ the distinct number of secretaries that have arrived up to round $r$\;
\If {$i_r$ is the Candidate}{\If{$d_r > f(n)$}{Return $i_r$\;}}
\If {$i_r$ is better than the Candidate}{the Candidate = $i_r$\;}}

\end{algorithm}
 \vspace{10pt}

We do not, at this time, attempt to find the function $f$ for which  Algorithm~\ref{alg:sec} is optimized;  we will optimize the parameter of a similar algorithm for a slightly different setting in Subsection~\ref{subsection:opt}. For now, we focus on the special case where $f(n)\equiv 0$, which we call the \emph{no waiting} case. Aside from being interesting in their own right, these results will come in useful later on, for tightly bounding the success probability.

\subsection{The No Waiting Case}
\label{sub_twothir}
In the classical secretary problem, even if we don't know $n$ in advance, we can still find the best secretary with a reasonable probability, assuming we have some other information regarding the secretaries. For example, the secretaries can have an known arrival time density over $[0,1]$ \cite{Bruss84}\footnote{Note that this is different from the alternative formulation described in the introduction as in this case $n$ is unknown.};
$n$ can be selected from some known distribution \cite{PS72};  there are other, similar scenarios (see e.g., \cite{Stewart81, ABT82, Por87}). However, with no advance knowledge at all,  it is impossible to attain a success probability better  than $1/n$ (with a deterministic algorithm):  if we don't accept the first item, we run the risk of there being no other items, while if we do accept it, we have accepted the best secretary with probability $1/n$. It is easy to see that while randomization may help a little, is cannot lead to a constant success probability.
In the returning-online scenario, though, we have the following result.
\begin{theorem}\label{thm:23}
In the returning secretary problem, even if we have no previous information on the secretaries, including the number of secretaries that will arrive, we can hire the best secretary with probability at least $2/3$. 
\end{theorem}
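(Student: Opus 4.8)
The plan is to analyze the ``no waiting'' algorithm: maintain a candidate (the best secretary seen so far), and accept the current candidate the moment it arrives for the second time. Since we have no information at all, this is essentially the only reasonable strategy, and the claim is that it wins with probability at least $2/3$. I would condition on the full random arrival sequence of the $2n$ arrival events and compute, for each secretary $s$, the probability that $s$ is both the best secretary \emph{and} gets accepted by the algorithm. We win exactly when the best secretary $b$ is accepted, which happens precisely when, at the moment of $b$'s second arrival, $b$ is already the candidate --- i.e., $b$'s first arrival has already occurred (trivially true) and the algorithm has not already accepted some earlier candidate upon \emph{its} second arrival.

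The key combinatorial step is to reformulate the losing event. The algorithm accepts a secretary $s$ (and thus fails if $s \neq b$) iff $s$ is a ``prefix-maximum'' at the time of its second arrival: $s$ is better than everyone who has arrived (first or second time) strictly before $s$'s second arrival. So the algorithm fails iff there is some secretary $s \ne b$ whose second arrival precedes $b$'s second arrival and who is better than all secretaries appearing before $s$'s second arrival. Equivalently --- and this is the clean way to see it --- letting the algorithm accept the \emph{first} secretary whose second copy is a running maximum over all arrivals seen so far, we lose iff the first such ``second-copy record'' is not $b$. I would therefore compute the probability that the \emph{first} arrival event that is simultaneously (i) a second copy and (ii) a left-to-right maximum among all $2n$ events happens to be the second copy of $b$.

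To evaluate this, I would use the standard random-permutation record trick. Consider scanning the $2n$ events left to right and tracking left-to-right maxima; among these maxima, we care about the first one that is a second copy. Condition on the relative order of the $2n$ events (uniform over $(2n)!$, but really only the induced order of distinct secretaries by rank and the interleaving of each secretary's two copies matters). A cleaner approach: for the best secretary $b$, winning is equivalent to the event that no secretary $s$ has \emph{both} of its copies arriving before $b$'s second copy. Indeed, if some $s$ has both copies before $b$'s second arrival, then at $s$'s second arrival $s$ was a running maximum among all-but-$b$, and since $b$ hadn't had its second arrival yet... one has to be slightly careful that $s$ might not have been the running max, but then some even-better secretary's second copy came first; chasing this down, the losing event is exactly ``some secretary other than $b$ has both copies before $b$'s second copy''. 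Then $\Pr[\text{win}] = \Pr[\text{at } b\text{'s second arrival, at most }b\text{ itself has completed both arrivals}]$. I would compute this by conditioning on the position of $b$'s second arrival among all $2n$ events, or more slickly, by a symmetry/backward-induction argument: reveal events from the last backward; $b$'s second copy is at some position, and we need every other secretary to have at least one copy strictly after it. Summing the resulting hypergeometric-type expression and taking $n \to \infty$ should give a limit, and for all finite $n$ the bound $\ge 2/3$ should hold (with equality in the limit, matching the ``$\sim 0.768$'' remark being for the optimal \emph{waiting} version, while $2/3$ is the no-waiting guarantee valid for all $n$ and all unknown instance sizes).

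The main obstacle is pinning down the losing event precisely --- the informal ``some $s\ne b$ has both copies before $b$'s second copy'' needs a careful argument that it is both necessary and sufficient for failure (the subtlety being whether such an $s$ is actually the one accepted, versus a different earlier record), and then showing the resulting probability is monotone in $n$ (or at least always $\ge 2/3$) rather than merely converging to something $\ge 2/3$. I expect the combinatorial identity to reduce, after conditioning on where $b$'s two copies land among $2n$ slots, to a sum like $\sum$ over the position $j$ of $b$'s second copy of (probability no other pair is fully contained in the first $j-1$ slots), which can be bounded cleanly; the worst case is small $n$, so I would check $n=1,2$ by hand and then argue the general bound.
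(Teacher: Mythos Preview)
Your high-level plan (analyze the no-waiting algorithm and show it wins with probability at least $2/3$) matches the paper's, but the combinatorial characterization you propose for the losing event is \emph{false}, and this is not a minor subtlety that can be ``chased down''---it is simply the wrong event.

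You claim (tentatively) that we lose iff some $s\neq b$ has both copies before $b$'s \emph{second} copy. Counterexample with $n=2$: arrival order $b,s,s,b$. Here $s$ has both copies before $b$'s second copy, yet after the first arrival $b$ is already the candidate, so $s$'s second arrival is ignored and we accept $b$. More generally, once $b$ has arrived even once, no other secretary can ever be accepted; hence the relevant boundary is $b$'s \emph{first} arrival, not its second. But even ``some $s\neq b$ has both copies before $b$'s first copy'' is not equivalent to losing: with $b>s>t$ and arrival order $s,t,t,b,\ldots$, secretary $t$ has both copies before $b$'s first arrival, yet $t$ is never the running maximum at its second arrival (because $s$ is), so nothing is accepted and we still win. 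The losing event genuinely depends on the \emph{interleaving} of ranks and copies before $b$'s first arrival, not just on which pairs are complete. Your proposed sum over the position of $b$'s second copy therefore computes the wrong quantity; for $n=2$ it gives $1/2$, whereas the true value is $5/6=(2n+1)/(3n)$.

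The paper's proof avoids any global characterization of the losing event. It conditions on the round $i$ of $b$'s \emph{first} arrival and computes $\Pr[\win_i]$ sequentially: at each round $j<i$, given that we have neither accepted anyone nor seen $b$, the current candidate has appeared exactly once, so exactly three of the remaining $2n-j+1$ arrival tokens are ``bad'' (two copies of $b$, whose arrival would end the conditioning, and the one remaining copy of the current candidate, whose arrival would cause a wrong acceptance). This yields a telescoping product, and summing over $i$ gives the closed form $\Pr[\win]=\frac{2n+1}{3n}\ge \frac{2}{3}$. If you want to salvage your approach, you should switch to conditioning on $b$'s first arrival and track the running maximum round by round, which is exactly what the paper does.
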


Denote by \emph{$\win$} the event that we hire the best secretary. Theorem~\ref{thm:23} is immediate from the following lemma.

\begin{lemma}\label{lemma:NW}
When applying Algorithm~\ref{alg:sec} to the returning secretary problem with $f(n)\equiv0$, 
\begin{equation*}
\Pr[\win] = \frac{2n+1}{3n}.
\end{equation*}
\end{lemma}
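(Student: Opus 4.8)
The plan is to compute $\Pr[\win]$ directly by conditioning on the arrival pattern of the best secretary. With $f(n)\equiv 0$, the algorithm accepts the current candidate the moment a returning arrival coincides with the candidate; so it wins precisely when the best secretary is the candidate at the time of his second arrival and the algorithm has not already stopped. Since the best secretary, once he arrives, is always the candidate (nobody can beat him), the algorithm wins if and only if it has not stopped before the best secretary's second arrival. So I would let $A$ be the position (in $[2n]$) of the best secretary's first arrival and $B>A$ the position of his second arrival, and analyze: the algorithm fails exactly when, among the first $B-1$ rounds, there is some round $r$ which is the second arrival of whoever is the current candidate at that time. Equivalently, fixing the relative order of the other $2(n-1)$ arrivals and the slots $A,B$, I would compute the probability that the algorithm stops strictly before round $B$.

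The cleaner route, which I expect the author takes, is to condition on $B$, the round of the best secretary's second arrival, and argue: given $B$, the algorithm wins iff it did not stop in rounds $1,\dots,B-1$, and by symmetry/exchangeability the event ``stop before round $B$'' depends on $B$ in a tractable way. First I would establish the distribution of $B$: the second arrival of a fixed secretary among $2n$ uniformly random slots is at position $B$ with probability $\frac{B-1}{\binom{2n}{2}} = \frac{2(B-1)}{2n(2n-1)}$ for $B \in \{2,\dots,2n\}$. Next, conditioned on the best secretary's two slots being a fixed pair with $\max = B$, the other $2n-2$ arrivals occupy the remaining slots in uniformly random relative order, and I need $\Pr[\text{no stop before round }B]$. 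A stop before round $B$ happens iff some secretary other than the best has both of his arrivals among the first $B-1$ rounds \emph{and} at the moment of his second arrival he is the best among all arrived so far. I would argue this reduces to: among the rounds before $B$, consider only those occupied by non-best secretaries; the algorithm stops iff the running-max secretary (restricted to non-best) ever appears twice before round $B$ — i.e., iff this is essentially the no-waiting returning-secretary process on a smaller random set, leading to a recursive/telescoping identity.

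A slicker formulation avoiding recursion: the algorithm fails iff there exists a secretary $s \neq \text{best}$ whose both copies precede the best's second copy and who is the maximum among all secretaries whose (at least one) copy precedes $s$'s second copy. I would instead directly count by a clean combinatorial bijection on the $(2n)!$ permutations. Consider the last arrival overall that is a ``candidate-confirmation'', i.e., look at the \emph{first} round $r^\star$ at which the current candidate makes his second appearance; the algorithm wins iff the candidate at round $r^\star$ is the best secretary. So $\Pr[\win] = \Pr[\text{at the first round where some candidate reappears, that candidate is the best}]$. I would compute this by summing over $m$, the value of that candidate's rank, or more efficiently by noting the candidate at $r^\star$ is the unique secretary who (a) has both copies among rounds $1,\dots,r^\star$, (b) is larger than everyone with a copy in $1,\dots,r^\star-1$, and no one else has this property earlier. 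Working this through should give the sum $\Pr[\win]=\sum_{j\ge 1}(\text{something})$ that telescopes to $\frac{2n+1}{3n}$.

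The main obstacle I anticipate is handling the ``has the algorithm already stopped'' dependency correctly — the event that determines winning is not simply about the best secretary's slots in isolation, but about whether an \emph{earlier} candidate got confirmed first, and the identity of earlier candidates is itself random and correlated with the slot configuration. The key lemma to nail down is that, conditioned on the best secretary's second arrival being at round $B$, the probability the process survives to round $B$ equals something like $\frac{2(2n-B+1)}{\text{(normalizer)}}$ or a ratio of binomials, obtained by observing that ``no confirmed candidate before round $B$'' is equivalent to: reading the $B-1$ slots before $B$, the set of secretaries appearing there, ordered by the position of their \emph{first} appearance vs. \emph{second} appearance, never has the running maximum completing a pair. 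I would verify the resulting expression by a small case ($n=1$: $\Pr[\win]=1=\frac{3}{3}$; $n=2$: check against $\frac{5}{6}$ by brute force over the $4!=24$ arrival orders), then sum $\sum_B \Pr[B=b]\Pr[\text{survive to }b]$ and simplify to $\frac{2n+1}{3n}$.
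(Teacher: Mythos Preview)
Your proposal is a plan, not a proof: you sketch three possible decompositions (condition on $B$, condition on the pair $(A,B)$, condition on the first confirmation round $r^\star$) but carry none of them to a closed-form sum, let alone to $\frac{2n+1}{3n}$. The paper's argument is short and concrete, and the reason is a simplification you state but do not exploit. You correctly observe that once the best secretary (call him Don) has arrived, he is the candidate forever; but then you condition on his \emph{second} arrival $B$. The right move is to condition on his \emph{first} arrival: since no stop can occur between Don's first and second appearances (the candidate is Don throughout that interval), the algorithm wins iff it has not stopped before Don's first arrival. Conditioning on $B$ forces you to integrate over the unknown location of $A$ inside $[1,B-1]$, which is exactly the ``dependency'' obstacle you flag at the end.

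The paper lets $\win_i$ be the event that Don first appears in round $i$ and the process has not stopped before round $i$. For each round $j\in\{2,\dots,i-1\}$ the forbidden arrivals are Don's two copies plus the single remaining copy of the current candidate, giving the factor $\frac{2n-j-2}{2n-j+1}$; this yields the telescoping product
\[
\Pr[\win_i]=\frac{2n-2}{2n}\cdot\frac{2n-4}{2n-1}\cdot\frac{2n-5}{2n-2}\cdots\frac{2n-i-1}{2n-i+2}\cdot\frac{2}{2n-i+1},
\]
which collapses to a quadratic in $2n-i$, and the sum $\sum_i\Pr[\win_i]$ evaluates directly to $\frac{2n+1}{3n}$. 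Your $r^\star$ idea is equivalent in spirit but would require the same round-by-round survival computation; the missing ingredient in your write-up is precisely this product and its summation.
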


\begin{proof}
Let us call the best secretary Don.  If we reach round $i$ and see Don, we say we \emph{win} on round $i$, and denote this event $\win_i$.  (Notice that we can say  that we win at this point even though this is the first time we see Don, as we will certainly hire him). The probability of winning on round $1$ is exactly the probability that Don arrives first:
$$\Pr[\win_1] = \frac{2}{2n}.$$
We win on round $2$ if any secretary other than Don arrived on round $1$, and Don arrived on round $2$.
$$\Pr[\win_2] =  \left(\frac{2n-2}{2n}\right)\left(\frac{2}{2n-1}\right).$$
The probability of winning on round $i>2$ is the following (the best secretary we had seen until that point did not return between rounds $2$ and $i-1$, and Don arrived on round $i$):
$$\Pr[\win_i] = \left(\frac{2n-2}{2n}\right)\left(\frac{2n-4}{2n-1}\right)\left(\frac{2n-5}{2n-2}\right)\left(\frac{2n-6}{2n-3}\right)\dots \left(\frac{2n-i-1}{2n-i+2}\right) \left(\frac{2}{2n-i+1}\right).$$
 Therefore

\begin{align}
\Pr[\win] &= \frac{1}{n} + \frac{1}{n(2n-1)(2n-3)}\dsum_{i=2}^{2n-2} (2n-i)(2n-i-1) \notag\\
&=\frac{1}{n}+ \frac{2(n-1)(2n-1)(2n-3)}{3n(2n-1)(2n-3)}\label{eq:simple1}\\
&=\frac{3}{3n}+ \frac{2(n-1)}{3n} \notag\\
&=\frac{2n+1}{3n},\notag
\end{align}
 where~\eqref{eq:simple1} is reached by substituting $j=2n-i$ and simplifying the sum.
\end{proof}

\subsection{Optimizing the Success Probability}
\label{subsection:opt}
We would now like to optimize $f$ in Algorithm~\ref{alg:sec} in order to maximize the algorithm's success probability.
For ease of analysis, we turn to the alternative model for the secretary problem: instead of generating a random permutation over the secretaries, each secretary $i$ is allocated, uniformly and independently at random, two real numbers $r^1_i, r^2_i \in [0,1)$, representing his two arrival times, $i^1$ and $i^2$. Assume that $f^*$ is the optimal function for Algorithm~\ref{alg:sec}. Fix $n$ and let $\zx$ denote the time of the arrival of the $(f^*(n))^{th}$ distinct secretary. It is easy to see  that  the two models are asymptotically  identical: for large $n$, $\Pr[i^j \text{is one of  the first }f^*(n) \text{ arrivals}]\approxeq \Pr[i^j \in [0,\zx)] $. The analysis in this model is much cleaner, and so, for simplicity, (and at the expense of accuracy for small $n$), we use it to obtain our bounds. The optimal algorithm for the returning secretary problem in this model is  Algorithm~\ref{alg:poop}.

We introduce some new notation.  
\begin{itemize} 
\item Denote by $\win(\zx)$ the event that we hire the best secretary when using Algorithm~\ref{alg:poop} with parameter $\zx$.
\item Let $\alpha_i(\zx)$ be the event that $r^1_i, r^2_i \in [0,\zx)$. 
\item Let $\beta_i(\zx)$ be the event that $r^1_i \in [0,\zx)$ and $r^2_i \in [\zx,1)$ or vice versa.
\item Let $\gamma_i(\zx)$ be the event that $r^1_i, r^2_i \in [\zx,1)$.
\end{itemize} 
We omit $(\zx)$ from the notation when it is clear from context. Label the best  secretary by $1$, the second best by $2$ and so on. Denote by $\win(NW_i)$ the event that we find the best secretary in the no waiting scenario with $i$ secretaries (recall that this is $\frac{2i+1}{3i}$). We make the following observations, which depend on the arrival times being independent.

\vspace{10pt}
\begin{algorithm}[H]
\caption{Returning secretary algorithm with parameter $\zx \in [0,1)$}\label{alg:poop}
\SetKwInOut{Input}{Input}\SetKwInOut{Output}{Output}

\Output{A secretary $s$}
\BlankLine
the Candidate = $\emptyset$\;
Observe the first secretary\;
\While{there are secretaries that have not arrived}{Let $i$ be the observed secretary\;
Let $t_i$ be the time that $i$ is observed\;
\If {$i$ is the Candidate}{\If{time $\geq\zx$} {Return $i$\;}}
\If {$i$ is better than the Candidate}{the Candidate = $i$\;}
Observe the next secretary\;}
\end{algorithm}
\vspace{10pt}

\begin{obs}\label{obs:abc}
$\forall i \in [n], \Pr[\alpha_i(\zx)] = \zx^2, \Pr[\beta_i(\zx)] = 2\zx(1-\zx), \Pr[\gamma_i(\zx)] = (1-\zx)^2$.
\end{obs}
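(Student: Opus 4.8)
The plan is to unwind the definitions and use only the two structural facts we are handed: each $r^j_i$ is uniform on $[0,1)$, and the two arrival times $r^1_i, r^2_i$ are mutually independent (and, although we will not need it here, independent across secretaries). Since a uniform variable on $[0,1)$ lands in a subinterval with probability equal to that subinterval's length, we immediately get $\Pr[r^j_i \in [0,\zx)] = \zx$ and $\Pr[r^j_i \in [\zx,1)] = 1-\zx$ for $j \in \{1,2\}$.

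From there each of the three quantities is a one-line computation. For $\alpha_i(\zx)$, the event ``$r^1_i \in [0,\zx)$ and $r^2_i \in [0,\zx)$'' is the intersection of two independent events, so $\Pr[\alpha_i(\zx)] = \Pr[r^1_i \in [0,\zx)]\cdot\Pr[r^2_i \in [0,\zx)] = \zx^2$. Symmetrically, $\gamma_i(\zx)$ is the intersection of the independent events ``$r^1_i \in [\zx,1)$'' and ``$r^2_i \in [\zx,1)$'', giving $\Pr[\gamma_i(\zx)] = (1-\zx)^2$. For $\beta_i(\zx)$ I would either split it into the two disjoint subcases ``$r^1_i \in [0,\zx), r^2_i \in [\zx,1)$'' and ``$r^1_i \in [\zx,1), r^2_i \in [0,\zx)$'', each of probability $\zx(1-\zx)$ by independence, summing to $2\zx(1-\zx)$; or observe that $\alpha_i(\zx), \beta_i(\zx), \gamma_i(\zx)$ partition the sample space, so $\Pr[\beta_i(\zx)] = 1 - \zx^2 - (1-\zx)^2 = 2\zx(1-\zx)$.

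There is essentially no obstacle here; the only thing worth stating explicitly is that the placement of secretary $i$'s two arrival times depends on nothing but $r^1_i$ and $r^2_i$, which is exactly the independence assumption flagged in the paragraph preceding the observation, so the product rule applies verbatim. The same remark is what will later justify treating the events $\alpha_i, \beta_i, \gamma_i$ for distinct $i$ as independent when we chain these probabilities together in the analysis of Algorithm~\ref{alg:poop}.
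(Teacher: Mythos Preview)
Your proposal is correct. The paper itself does not give a proof of this observation at all (it is stated as an immediate consequence of the model and of the independence remark just preceding it), and your argument is exactly the routine computation the paper is implicitly relying on.
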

\begin{obs}\label{obs:abc2}
$\Pr[\win|\gamma_1, \gamma_2, \ldots, \gamma_i, \alpha_{i+1}] = \Pr[\win(NW_i)]$.
\end{obs}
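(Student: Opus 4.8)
The plan is to fix the conditioning event $\gamma_1,\dots,\gamma_i,\alpha_{i+1}$ and follow the run of Algorithm~\ref{alg:poop} from time $\zx$ onward. Two preliminary facts are immediate from the definitions. First, Algorithm~\ref{alg:poop} never returns a secretary at a time before $\zx$, so the outcome of a run is completely determined by the behaviour from time $\zx$ on. Second, under the conditioning the candidate at time $\zx$ is secretary $i+1$: by $\gamma_1,\dots,\gamma_i$ none of the secretaries $1,\dots,i$ has arrived by time $\zx$, by $\alpha_{i+1}$ secretary $i+1$ has arrived (in fact twice), and every other secretary is outranked by $i+1$; hence $i+1$ is the best secretary to have arrived in $[0,\zx)$, regardless of the arrival times of $i+2,\dots,n$.

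I would then reduce the post-$\zx$ process to the no-waiting problem on the instance $\{1,\dots,i\}$. The key claim is that secretaries $i+2,\dots,n$ can be ignored entirely: whenever one of them arrives after time $\zx$, the standing candidate is either $i+1$ or one of $1,\dots,i$, all of which outrank it, so it never becomes the candidate and is never accepted. Secretary $i+1$ is likewise never accepted after $\zx$, having already arrived twice. Hence the only rounds after $\zx$ that can change the candidate or cause an acceptance are arrivals of secretaries $1,\dots,i$, and the algorithm accepts the first of them that is still the best among $1,\dots,i$ seen so far at the moment of its \emph{second} appearance. This is precisely the rule of the no-waiting algorithm on $i$ secretaries; the fact that the candidate at time $\zx$ is $i+1$ rather than ``empty'' is harmless, because $i+1$ is outranked by all of $1,\dots,i$ and so is replaced by the first of them to arrive, exactly as the empty initial candidate would be, and it can never itself be accepted.

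It remains to check that the conditioning leaves the randomness driving this reduced process undistorted. Since the $2n$ arrival times are mutually independent, conditioning on $\gamma_1,\dots,\gamma_i$ makes the $2i$ arrival times of secretaries $1,\dots,i$ i.i.d.\ uniform on $[\zx,1)$, and these are independent of everything else --- in particular of the arrival times of $i+2,\dots,n$, which are untouched by all the conditioning events. So the $2i$ arrival rounds of secretaries $1,\dots,i$ occur in a uniformly random order, which is exactly the input distribution of the no-waiting model with $i$ secretaries. Combining this with the reduction above gives $\Pr[\win\mid \gamma_1,\dots,\gamma_i,\alpha_{i+1}] = \Pr[\win(NW_i)]$ (which equals $\tfrac{2i+1}{3i}$ by Lemma~\ref{lemma:NW}). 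I expect the only place that needs genuine care is this last step together with the ``ignore $i+2,\dots,n$'' claim: both are precisely where the independence-of-arrival-times hypothesis flagged just before the observations gets used; the remaining pieces (no acceptance before $\zx$, identity of the candidate at $\zx$) are read off from the definitions.
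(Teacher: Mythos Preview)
Your proposal is correct and follows the same approach as the paper's proof, which is a terse three-sentence version of exactly your argument: the top $i$ secretaries all appear in $[\zx,1)$, secretary $i+1$ has both appearances in $[0,\zx)$, hence one of the top $i$ is certainly chosen and the situation is the no-waiting scenario on $i$ secretaries. Your write-up simply spells out the details the paper leaves implicit (identity of the candidate at time $\zx$, why secretaries $i+2,\dots,n$ are irrelevant, and why the conditioned distribution on $[\zx,1)$ matches the no-waiting input).
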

\begin{proof}
If $\gamma_1, \gamma_2, \ldots, \gamma_i$ hold then all of the  appearances of the best $i$ secretaries are in the interval $[\zx,1)$. Both appearances of the  $(i+1)^{th}$ best secretary are in $[0,\zx)$; therefore we will definitely choose one of the $i$ best secretaries, and the probability of choosing the best is as in the no waiting scenario.
\end{proof}
\begin{obs}\label{obs:abc3}
$\Pr[\win|\gamma_1, \gamma_2, \ldots, \gamma_{i}, \beta_{i+1}] = \Pr[\win(NW_{i+1})| \text{secretary } i+1 \text{ is the first to arrive}]$.
\end{obs}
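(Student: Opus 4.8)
The plan is to show that, conditioned on $\gamma_1,\dots,\gamma_i,\beta_{i+1}$, a run of Algorithm~\ref{alg:poop} is indistinguishable (as far as the event $\win$ is concerned) from a no-waiting run on the top $i+1$ secretaries in which secretary $i+1$ happens to arrive first. First I would unpack the conditioning. Since $\beta_{i+1}$ holds, exactly one of secretary $(i+1)$'s two arrival times lies in $[0,\zx)$; because the two times are i.i.d.\ uniform, that time is the earlier of the two, i.e.\ it is $(i+1)$'s \emph{first} appearance, and it occurs strictly before $\zx$. Since $\gamma_1,\dots,\gamma_i$ hold, none of the top $i$ secretaries appears before time $\zx$. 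Hence, among secretaries $1,\dots,i+1$, the only arrival during the waiting window $[0,\zx)$ is the first appearance of secretary $i+1$; every other arrival in $[0,\zx)$ belongs to one of $i+2,\dots,n$, all of whom are worse than $i+1$. Consequently the Candidate at time $\zx$ is secretary $i+1$.

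Next I would argue that secretaries $i+2,\dots,n$ are irrelevant to what happens from time $\zx$ onward. At time $\zx$ the Candidate is $i+1$; the Candidate changes only when a secretary better than the current Candidate arrives, and since $i+1$ beats every one of $i+2,\dots,n$, any later change of Candidate is triggered by one of $1,\dots,i$. Therefore the behaviour of the algorithm after $\zx$ — in particular, whether it ultimately accepts secretary $1$ — is a deterministic function of the relative order of those arrivals occurring after $\zx$ that belong to $\{1,\dots,i+1\}$: namely the second appearance of $i+1$ together with both appearances of each of $1,\dots,i$, a total of $2i+1$ events. Because all $2n$ arrival times are independent and uniform, conditioning on $\gamma_1,\dots,\gamma_i,\beta_{i+1}$ leaves this relative order uniformly distributed over all $(2i+1)!$ permutations.

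Finally I would match this with the no-waiting scenario on the $i+1$ secretaries $1,\dots,i+1$. There the $2(i+1)$ arrivals come in uniformly random order; conditioning on secretary $i+1$ arriving first deletes that one event, installs $i+1$ as the Candidate, and leaves the remaining $2i+1$ arrivals (the second appearance of $i+1$ and both appearances of each of $1,\dots,i$) uniformly ordered — precisely the situation produced above. In both processes $\win$ means ``accept secretary $1$'': note that $\gamma_1$ guarantees both of secretary $1$'s appearances lie after $\zx$, so he remains selectable, and no secretary outside $\{1,\dots,i+1\}$ can ever be selected. Since the two processes are then identical from this point on, the probability of accepting secretary $1$ agrees, giving the claimed identity with $\Pr[\win(NW_{i+1})\mid \text{secretary } i+1 \text{ is the first to arrive}]$. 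The only delicate point is the bookkeeping of the second paragraph — checking that the worse secretaries $i+2,\dots,n$ genuinely never affect the Candidate after $\zx$, and that the conditional law of the $2i+1$ relevant arrivals is exactly uniform — but both follow directly from the independence of the arrival times.
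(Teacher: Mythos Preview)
Your argument is correct and follows exactly the same reduction as the paper's proof: under $\gamma_1,\dots,\gamma_i,\beta_{i+1}$ the top $i$ secretaries lie entirely in $[\zx,1)$ while $i+1$ has already arrived once by time $\zx$, so the situation from $\zx$ onward is precisely the no-waiting game on secretaries $1,\dots,i+1$ with $i+1$ having arrived first. The paper states this in two sentences; you have carefully filled in the supporting details (irrelevance of secretaries $i+2,\dots,n$ after $\zx$, and uniformity of the conditional order of the $2i+1$ remaining relevant arrivals), but the approach is the same.
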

\begin{proof}
If $\gamma_1, \gamma_2, \ldots, \gamma_i$ and $\beta_{i+1}$ hold then all appearances of the best $i$ secretaries are in the interval $[\zx,1)$, and the $(i+1)^{th}$ secretary arrived once by time $\zx$. This reduces to the problem of choosing the best secretary in the no waiting scenario, given that the $(i+1)^{th}$ secretary arrives first.
\end{proof}
\begin{claim}\label{claim:abc}
$\Pr[\win(NW_{i+1})| \text{secretary } i+1 \text{ is the first to arrive}]= \frac{2i}{2i+1}\Pr[\win(NW_{i})]$.
\end{claim}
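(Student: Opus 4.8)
The plan is to compute both sides in the "no waiting" model with the arrival-order (permutation) formulation, conditioning on the event that secretary $i+1$ is the very first of all $2(i+1)$ arrivals to appear. Write $N = i+1$ for brevity. On the left-hand side we run the no-waiting algorithm on $N$ secretaries, and we know by Lemma~\ref{lemma:NW} that unconditionally $\Pr[\win(NW_N)] = \frac{2N+1}{3N}$. The subtlety is the conditioning: since secretary $N$ (the worst of the $N$) arrives first, he is never the candidate after round $1$ — so "winning" here means the algorithm outputs Don, the best of the $N$ secretaries, and Don is certainly not secretary $N$. First I would observe that conditioning on secretary $N$ arriving first is the same as simply deleting that first arrival and looking at the induced random order on the remaining $2N-1$ arrivals, which consists of one more copy of secretary $N$ together with two copies each of secretaries $1,\dots,N-1$.

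Next I would argue that the presence of that one remaining stray copy of secretary $N$ is irrelevant to whether we win: secretary $N$ is the worst, so he is never the candidate once any other secretary has arrived (which has happened, since round $1$ was his first appearance), hence his second appearance triggers no action and does not block any other secretary from being accepted. Therefore, conditioned on secretary $N$ being first, the event $\win(NW_N)$ is determined entirely by the relative order of the $2(N-1) = 2i$ appearances of secretaries $1,\dots,N-1$ — and on that sub-order the algorithm behaves exactly like the no-waiting algorithm on $i$ secretaries. So the probability we hire the best among $\{1,\dots,N-1\} = \{1,\dots,i\}$ equals $\Pr[\win(NW_i)]$. The one thing left is to account for the factor $\frac{2i}{2i+1}$: winning $NW_{N}$ (given $N$ first) requires hiring Don, which requires that Don is the best of the remaining secretaries that matter; but Don is automatically the best among $1,\dots,i$. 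Wait — so where does $\frac{2i}{2i+1}$ come from? It must come from the rounds: I need to re-examine which arrival counts as "round $1$" in the reduced instance. After deleting secretary $N$'s first arrival, his second arrival is still floating among the $2i$ meaningful arrivals, and "round $1$" of the $NW_i$-process should be the first of those $2i$ meaningful arrivals; the probability that secretary $N$'s stray copy does not sit in front of all of them — i.e., lands among the remaining $2i$ positions rather than being the new first — is $\frac{2i}{2i+1}$, and only in that case does the reduced process look like a genuine uniformly random order on $2i$ elements with the correct notion of first round. Conversely, if that stray copy is the new first arrival it is harmless (still secretary $N$, still never the candidate), so actually the identity should be checked both ways.

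Given the above, the cleanest route I would actually write up is a direct counting argument. Condition on secretary $N$ arriving first; then among the remaining $2N-1 = 2i+1$ slots, place the second copy of secretary $N$ uniformly at random, and the $2i$ copies of secretaries $1,\dots,i$ in uniform relative order among the rest. With probability $\frac{1}{2i+1}$ the stray copy is in slot $2$ (the global second arrival); with probability $\frac{2i}{2i+1}$ it is somewhere later. In the latter case the $2i$ meaningful arrivals occupy a uniformly random order with a well-defined first element, so the win probability is exactly $\Pr[\win(NW_i)]$; in the former case, the first meaningful arrival is at global round $3$, but one checks it contributes the same $\Pr[\win(NW_i)]$ too — hmm, if so the conditioning would give exactly $\Pr[\win(NW_i)]$ with no extra factor. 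So the factor $\frac{2i}{2i+1}$ must instead reflect that in the "$N$ arrives first, stray copy in slot $2$" case we actually \emph{lose} (because after two appearances of the worst secretary nothing has changed, and then... no, still fine). The honest answer: I expect the factor comes from a slightly different decomposition, and the main obstacle is pinning down exactly which sub-event of $NW_{i+1}$ (given $N$ first) fails relative to a clean copy of $NW_i$. Concretely, I would write $\Pr[\win(NW_{i+1}) \mid N \text{ first}]$ as a sum over the round on which Don first appears, use the same product formula for $\Pr[\win_j]$ as in the proof of Lemma~\ref{lemma:NW} but with the first factor removed (since round $1$ is fixed to be secretary $N$), telescope/re-index the resulting sum via the substitution $j = 2n - i$ as before, and compare term-by-term with the expression for $\Pr[\win(NW_i)]$; the ratio of the two closed forms will be $\frac{2i}{2i+1}$. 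The main obstacle is purely bookkeeping: correctly adjusting the product formula for the conditioned process and matching it against the unconditioned $NW_i$ sum so the algebra collapses to the stated factor.
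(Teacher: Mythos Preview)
You have the right decomposition but you dismiss the correct answer when you reach it. The paper's proof is exactly the case split you set up: condition on secretary $i+1$ occupying slot~$1$, and ask whether his stray second copy lands in slot~$2$. The key point you miss is that if it does, \emph{the algorithm accepts secretary $i+1$}. In the no-waiting algorithm, after round~$1$ the Candidate is secretary $i+1$ (he is the only secretary seen so far, so vacuously the best). If he arrives again at round~$2$, he is the Candidate returning, $d_r = 1 > f(n) = 0$, and the algorithm outputs him. Since he is the worst of the $i+1$ secretaries, this is a loss. Hence the slot-$2$ case contributes $0$, and the factor $\tfrac{2i}{2i+1}$ is precisely the probability that the stray copy avoids slot~$2$; in that complementary event some better secretary arrives at round~$2$, secretary $i+1$ is never the Candidate again, his remaining arrival is inert, and the process on secretaries $1,\dots,i$ is a clean copy of $NW_i$.

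Your parenthetical ``(which has happened, since round~$1$ was his first appearance)'' is the slip: round~$1$ being secretary $i+1$ does \emph{not} mean another secretary has arrived --- it means only secretary $i+1$ has arrived. And your later ``after two appearances of the worst secretary nothing has changed, and then\ldots\ no, still fine'' is exactly backwards: everything has changed, the process has terminated. Your fallback plan of redoing the $\Pr[\win_j]$ sum from Lemma~\ref{lemma:NW} with the first factor stripped would eventually work, but it is unnecessary once you see that the slot-$2$ case is a loss.
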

\begin{proof}
Given that $i+1$ is the first to arrive, if $i+1$ arrives second, we lost. If not, $i+1$ cannot be chosen anymore, and we are exactly in the no waiting scenario with $i$ secretaries. The probability that $i+1$ arrives second given that he also arrives first is $\frac{1}{2i+1}$.
\end{proof}

Combining Observation~\ref{obs:abc3} and Claim~\ref{claim:abc} gives the following corollary.
\begin{corollary}\label{corr:abc2}
$\Pr[\win|\gamma_1, \gamma_2, \ldots, \gamma_{i}, \beta_{i+1}] = \frac{2i}{2i+1}\Pr[\win(NW_{i})]$.
\end{corollary}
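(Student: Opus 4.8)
The corollary to be proven is merely the arithmetic composition of Observation~\ref{obs:abc3} and Claim~\ref{claim:abc}: the former identifies $\Pr[\win\mid\gamma_1,\dots,\gamma_i,\beta_{i+1}]$ with $\Pr[\win(NW_{i+1})\mid \text{secretary }i+1\text{ is the first to arrive}]$, and the latter rewrites that conditional probability as $\frac{2i}{2i+1}\Pr[\win(NW_i)]$. So my plan is simply to chain these two equalities. First I would invoke Observation~\ref{obs:abc3} to pass from the event conditioned on the $\gamma$'s and $\beta_{i+1}$ to the ``no waiting with $i+1$ secretaries, secretary $i+1$ arrives first'' scenario; the justification there is that under $\gamma_1,\dots,\gamma_i$ all appearances of the top $i$ secretaries fall in $[\zx,1)$ (so the waiting threshold $\zx$ plays no role once we are past it), and $\beta_{i+1}$ places exactly one appearance of secretary $i+1$ before $\zx$, which is equivalent to secretary $i+1$ being the first arrival in the reduced instance on $\{1,\dots,i+1\}$. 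Then I would apply Claim~\ref{claim:abc} verbatim to get the factor $\frac{2i}{2i+1}$.

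The one point worth spelling out — and the only place anything could go wrong — is the reduction in Observation~\ref{obs:abc3}: one must check that conditioning on the top-$i$ secretaries all landing in $[\zx,1)$ and on secretary $i+1$ having one appearance in each of $[0,\zx)$ and $[\zx,1)$ does not bias the \emph{relative order} of arrivals among the relevant secretaries. This follows from independence of the arrival times (as the paper flags before Observation~\ref{obs:abc}): conditioning each of the top $i$ secretaries to the subinterval $[\zx,1)$ keeps their two arrival times i.i.d.\ uniform on that subinterval, and similarly for secretary $i+1$'s two times within their respective halves; hence the induced permutation on the arrival events that occur at or after time $\zx$ is uniform, exactly as in the no-waiting instance with $i+1$ secretaries in which secretary $i+1$ is forced to be the first arrival. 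The lower-ranked secretaries $i+2,\dots,n$ are irrelevant since the algorithm will have committed to one of the top $i+1$ by then.

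So the proof is two lines: the corollary is obtained by substituting the expression from Claim~\ref{claim:abc} into the right-hand side of the identity from Observation~\ref{obs:abc3}. I do not anticipate a genuine obstacle; the ``hard part,'' such as it is, was already handled in proving Observation~\ref{obs:abc3} and Claim~\ref{claim:abc}, and here it only remains to note that the two conditional probabilities refer to the same quantity and combine them. For completeness I would state the chain as
\[
\Pr[\win\mid\gamma_1,\dots,\gamma_i,\beta_{i+1}]
=\Pr[\win(NW_{i+1})\mid \text{secretary }i+1\text{ first}]
=\frac{2i}{2i+1}\Pr[\win(NW_i)],
\]
citing Observation~\ref{obs:abc3} for the first equality and Claim~\ref{claim:abc} for the second.
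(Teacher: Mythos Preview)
Your proposal is correct and matches the paper's approach exactly: the paper derives the corollary by simply stating that it follows from combining Observation~\ref{obs:abc3} and Claim~\ref{claim:abc}, which is precisely the two-line chain you write out. If anything, your discussion of why the reduction in Observation~\ref{obs:abc3} preserves the uniform ordering is more detailed than what the paper provides.
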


We are now able to obtain  a recursive representation of $\Pr[\win|\gamma_1, \gamma_2, \ldots, \gamma_i]$.
\begin{claim}\label{claim:abc2}
$\Pr[\win|\gamma_1, \gamma_2, \ldots, \gamma_i] =  \frac{\zx^2 + 4\zx i - 2\zx^2i}{3i} + (1-\zx)^2 \Pr[\win|\gamma_1, \gamma_2, \ldots,  \gamma_{i+1}]$.
\end{claim}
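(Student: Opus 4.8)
The plan is to condition on which of the three events $\alpha_{i+1}, \beta_{i+1}, \gamma_{i+1}$ occurs for the $(i+1)^{\text{th}}$-best secretary, given that $\gamma_1,\dots,\gamma_i$ all hold. Since the arrival times are independent across secretaries (Observation~\ref{obs:abc}), conditioning on $\gamma_1,\dots,\gamma_i$ does not change the distribution of $(r^1_{i+1}, r^2_{i+1})$, so by the law of total probability
\[
\Pr[\win\mid \gamma_1,\dots,\gamma_i] = \Pr[\alpha_{i+1}]\cdot P_\alpha + \Pr[\beta_{i+1}]\cdot P_\beta + \Pr[\gamma_{i+1}]\cdot P_\gamma,
\]
where $P_\alpha = \Pr[\win\mid \gamma_1,\dots,\gamma_i,\alpha_{i+1}]$, and similarly for $\beta,\gamma$. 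By Observation~\ref{obs:abc}, the three probabilities are $\zx^2$, $2\zx(1-\zx)$, and $(1-\zx)^2$.

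Next I would substitute the values of $P_\alpha$, $P_\beta$, $P_\gamma$ supplied by the earlier results. Observation~\ref{obs:abc2} gives $P_\alpha = \Pr[\win(NW_i)] = \frac{2i+1}{3i}$. Corollary~\ref{corr:abc2} gives $P_\beta = \frac{2i}{2i+1}\Pr[\win(NW_i)] = \frac{2i}{2i+1}\cdot\frac{2i+1}{3i} = \frac{2}{3}$. And $P_\gamma = \Pr[\win\mid \gamma_1,\dots,\gamma_{i+1}]$ by definition, which is exactly the recursive term we want to keep. Plugging in:
\[
\Pr[\win\mid \gamma_1,\dots,\gamma_i] = \zx^2\cdot\frac{2i+1}{3i} + 2\zx(1-\zx)\cdot\frac{2}{3} + (1-\zx)^2\,\Pr[\win\mid \gamma_1,\dots,\gamma_{i+1}].
\]

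The remaining work is purely algebraic: I would combine the first two terms over the common denominator $3i$, writing $\zx^2(2i+1) + 4\zx(1-\zx)i = 2\zx^2 i + \zx^2 + 4\zx i - 4\zx^2 i = \zx^2 + 4\zx i - 2\zx^2 i$, which matches the claimed numerator $\zx^2 + 4\zx i - 2\zx^2 i$ over $3i$. This confirms the stated identity. I do not anticipate a genuine obstacle here; the only point requiring care is justifying that conditioning on $\gamma_1,\dots,\gamma_i$ leaves the event $\gamma_{i+1}$ (and its refinements $\alpha_{i+1}, \beta_{i+1}$) with their unconditional probabilities — this is exactly the independence of arrival times noted before Observation~\ref{obs:abc}, so invoking that is the whole of the "hard part."
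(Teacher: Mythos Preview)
Your proposal is correct and matches the paper's own proof essentially line for line: both condition on $\alpha_{i+1}$, $\beta_{i+1}$, $\gamma_{i+1}$ using independence of arrival times, invoke Observation~\ref{obs:abc2} and Corollary~\ref{corr:abc2}, and then do the same algebraic simplification. The only cosmetic difference is that the paper keeps $\Pr[\win(NW_i)]$ as a common factor and substitutes $\frac{2i+1}{3i}$ at the very end, whereas you substitute it immediately; the computations are otherwise identical.
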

\begin{proof}
\begin{align}
\Pr[\win|\gamma_1, \gamma_2, \ldots, \gamma_i] =& \Pr[\win|\gamma_1, \gamma_2, \ldots, \gamma_i, \alpha_{i+1}]\Pr[\alpha_{i+1}] \notag\\
 &+\Pr[\win|\gamma_1, \gamma_2, \ldots, \gamma_i, \beta_{i+1}]\Pr[\beta_{i+1}] \notag\\
 &+\Pr[\win|\gamma_1, \gamma_2, \ldots, \gamma_i, \gamma_{i+1}]\Pr[\gamma_{i+1}] \notag\\
 =& \zx^2\Pr[\win(NW_{i})] +\frac{4\zx i(1-\zx)}{2i+1}\Pr[\win(NW_{i})] \label{eq:abc}\\
 &+ (1-\zx)^2 \Pr[\win|\gamma_1, \gamma_2, \ldots,  \gamma_{i+1}] \notag\\
 =& \frac{\zx^2 + 4\zx i - 2\zx^2i}{2i+1}\Pr[\win(NW_{i})] + (1-\zx)^2 \Pr[\win|\gamma_1, \gamma_2, \ldots,  \gamma_{i+1}], \notag\\
 =& \frac{\zx^2 + 4\zx i - 2\zx^2i}{3i} + (1-\zx)^2 \Pr[\win|\gamma_1, \gamma_2, \ldots,  \gamma_{i+1}],\label{eq:2i}
\end{align}
where~\eqref{eq:abc} is due to Observations~\ref{obs:abc}, and~\ref{obs:abc2}  and Corollary~\ref{corr:abc2}, and~\eqref{eq:2i} is due to Lemma~\ref{lemma:NW}.
\end{proof}

 \begin{claim} For any constant $k$, and any $\zx \in [0,1)$,
 \begin{equation}
 \Pr[\win] \geq 2\zx(1-\zx) + \dsum_{i=1}^k \left( \frac{(1-\zx)^{2i}(\zx^2+4\zx i-2\zx^2i)}{3i}\right) + \frac{2}{3}(1-\zx)^{2k+1} .\label{eq:lemma}
 \end{equation} 
 \end{claim}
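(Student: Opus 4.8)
The plan is to turn the recursion of Claim~\ref{claim:abc2} into a closed form for $\Pr[\win]$, truncate it after the first $k$ terms, and bound the surviving tail from below by $2/3$.

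First I would establish the ``base step'' by conditioning on where the two appearances of the best secretary (secretary $1$) fall relative to $\mu$, i.e.\ on $\alpha_1,\beta_1,\gamma_1$. If $\alpha_1$ holds, both appearances precede $\mu$, so the algorithm never gets a chance to accept the best secretary and $\Pr[\win\mid\alpha_1]=0$. If $\beta_1$ holds, the best secretary has appeared exactly once by time $\mu$ (the appearance in $[0,\mu)$ is necessarily the earlier one), so from time $\mu$ on he is, and remains, the Candidate; the algorithm therefore accepts him at his second, post-$\mu$ appearance, giving $\Pr[\win\mid\beta_1]=1$. With Observation~\ref{obs:abc} this yields $\Pr[\win]=2\mu(1-\mu)+(1-\mu)^2\Pr[\win\mid\gamma_1]$. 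Iterating Claim~\ref{claim:abc2} --- substituting its right-hand side for $\Pr[\win\mid\gamma_1,\dots,\gamma_i]$ for $i=1,\dots,k$, each substitution introducing one more factor $(1-\mu)^2$ --- then produces the exact identity
\[
\Pr[\win]=2\mu(1-\mu)+\sum_{i=1}^{k}\frac{(1-\mu)^{2i}\bigl(\mu^2+4\mu i-2\mu^2 i\bigr)}{3i}+(1-\mu)^{2k+2}\,\Pr[\win\mid\gamma_1,\dots,\gamma_{k+1}].
\]

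The one genuine obstacle is the lower bound $\Pr[\win\mid\gamma_1,\dots,\gamma_{k+1}]\ge\tfrac23$. I would prove the slightly stronger fact that $\Pr[\win\mid\gamma_1,\dots,\gamma_i]\ge\tfrac23$ for all $i$ with $k+1\le i\le n$, by downward induction on $i$. Base case $i=n$: if every secretary has both appearances in $[\mu,1)$, nothing happens before time $\mu$ and from then on the algorithm coincides with the no-waiting algorithm on all $n$ secretaries in a uniformly random order, so $\Pr[\win\mid\gamma_1,\dots,\gamma_n]=\tfrac{2n+1}{3n}\ge\tfrac23$ by Lemma~\ref{lemma:NW}. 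Inductive step: conditioning additionally on $\alpha_{i+1},\beta_{i+1},\gamma_{i+1}$ exactly as in the proof of Claim~\ref{claim:abc2}, Observation~\ref{obs:abc2} gives $\Pr[\win\mid\gamma_1,\dots,\gamma_i,\alpha_{i+1}]=\Pr[\win(NW_i)]=\tfrac{2i+1}{3i}\ge\tfrac23$, Corollary~\ref{corr:abc2} together with Lemma~\ref{lemma:NW} gives $\Pr[\win\mid\gamma_1,\dots,\gamma_i,\beta_{i+1}]=\tfrac23$, and the induction hypothesis gives $\Pr[\win\mid\gamma_1,\dots,\gamma_{i+1}]\ge\tfrac23$; since $\Pr[\alpha_{i+1}]+\Pr[\beta_{i+1}]+\Pr[\gamma_{i+1}]=1$ (Observation~\ref{obs:abc}), the weighted average of the three is again at least $\tfrac23$.

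Plugging $\Pr[\win\mid\gamma_1,\dots,\gamma_{k+1}]\ge\tfrac23$ into the displayed identity yields the asserted inequality~\eqref{eq:lemma} (with the tail carrying the factor $(1-\mu)^{2k+2}$ that the iteration naturally produces; the step tacitly requires $k+1\le n$, which holds for every sufficiently large $n$ once $k$ is fixed, and that is exactly the regime used for the asymptotic bounds). Everything else is routine accounting of the geometric-type coefficients thrown up by the recursion, so no further difficulty is expected.
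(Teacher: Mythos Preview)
Your approach is essentially the same as the paper's: condition on $\alpha_1,\beta_1,\gamma_1$, iterate Claim~\ref{claim:abc2}, and then lower-bound the residual conditional probability by $\tfrac23$. The paper simply asserts ``noticing that $\Pr[\win\mid\gamma_1,\dots,\gamma_i]\ge\tfrac23$'' without justification; your downward induction (base case $i=n$ via Lemma~\ref{lemma:NW}, step via Observation~\ref{obs:abc2}, Corollary~\ref{corr:abc2} and convexity of the average) is exactly the natural way to fill that in, so your write-up is strictly more complete than the paper's own proof.

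You are also right about the exponent: the recursion genuinely produces a tail of $(1-\mu)^{2k+2}\Pr[\win\mid\gamma_1,\dots,\gamma_{k+1}]$, not $(1-\mu)^{2k+1}$. Since $(1-\mu)^{2k+2}\le(1-\mu)^{2k+1}$, the inequality as printed in~\eqref{eq:lemma} is marginally stronger than what the argument actually yields; this looks like a typo in the paper. It is harmless downstream, because in the very next lemma the authors explicitly discard this lowest-order term anyway, but your observation is correct and worth keeping.
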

\begin{proof}

\begin{align}
\Pr[\win] =& \Pr[\win|\alpha_1]\cdot\Pr[\alpha_{1}] +\Pr[\win| \beta_{1}]\cdot\Pr[\beta_{1}] +\Pr[\win|\gamma_1]\cdot\Pr[\gamma_{1}] \notag\\
=& 0\cdot (\zx^2)+ 1\cdot 2\zx(1-\zx) +  \Pr[\win|\gamma_1] \cdot (1-\zx)^2\label{eq:obs34},
\end{align}
where~\eqref{eq:obs34} is due to Observation~\ref{obs:abc}. 

Recursively applying Claim~\ref{claim:abc2}, and noticing that $\Pr[\win|\gamma_1, \gamma_2, \ldots, \gamma_i] \geq \frac{2}{3}$, for all $i$, completes the claim.

\end{proof}

\begin{lemma}
For any $x \in [0,1)$, $\Pr[\win] \geq 2x -\frac{4}{3}x^2 - \frac{1}{3}(1-x)^2\log(1-x^2)$.
\end{lemma}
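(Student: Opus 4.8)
The plan is to take the inequality from the previous claim, namely
\[
\Pr[\win] \geq 2\zx(1-\zx) + \sum_{i=1}^k \frac{(1-\zx)^{2i}(\zx^2+4\zx i-2\zx^2i)}{3i} + \frac{2}{3}(1-\zx)^{2k+1},
\]
and let $k \to \infty$. Since $0 \le 1-\zx < 1$, the term $\frac{2}{3}(1-\zx)^{2k+1}$ vanishes in the limit, so it suffices to evaluate the infinite series $\sum_{i=1}^\infty \frac{(1-\zx)^{2i}(\zx^2+4\zx i-2\zx^2i)}{3i}$ in closed form. Writing $q = (1-\zx)^2$ for brevity, I would split the summand into the part proportional to $1/i$ and the part with the $i$ cancelling: the coefficient $\zx^2 + 4\zx i - 2\zx^2 i = \zx^2 + 2\zx(2-\zx)i$, so
\[
\sum_{i=1}^\infty \frac{q^{i}(\zx^2+2\zx(2-\zx)i)}{3i} = \frac{\zx^2}{3}\sum_{i=1}^\infty \frac{q^i}{i} + \frac{2\zx(2-\zx)}{3}\sum_{i=1}^\infty q^i.
\]

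Now I would invoke the two standard power-series identities $\sum_{i\ge1} q^i/i = -\log(1-q)$ and $\sum_{i\ge1} q^i = q/(1-q)$, both valid for $|q|<1$. Substituting $q = (1-\zx)^2$ gives $1-q = 1-(1-\zx)^2 = 2\zx - \zx^2 = \zx(2-\zx)$, hence $\sum q^i = (1-\zx)^2/(\zx(2-\zx))$ and $\sum q^i/i = -\log(1-(1-\zx)^2) = -\log((2-\zx)\zx)$. The second series contribution becomes $\frac{2\zx(2-\zx)}{3}\cdot \frac{(1-\zx)^2}{\zx(2-\zx)} = \frac{2}{3}(1-\zx)^2$, which combines with the leading $2\zx(1-\zx)$ term. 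The first series contributes $-\frac{\zx^2}{3}\log((2-\zx)\zx) = -\frac{\zx^2}{3}\log(1-(1-\zx)^2)$.

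Then I would simply collect terms: $\Pr[\win] \geq 2\zx(1-\zx) + \frac{2}{3}(1-\zx)^2 - \frac{\zx^2}{3}\log(1-(1-\zx)^2)$. To match the stated form with variable $x$ and argument $1-x^2$, note $2x(1-x) + \frac23(1-x)^2 = 2x - 2x^2 + \frac23(1-2x+x^2) = 2x - \frac43 x^2 + \frac23 - \frac43 x + \frac23 x^2$; this does not obviously collapse to $2x - \frac43 x^2$, so I should recheck the correspondence. Most likely the intended substitution is $x \leftrightarrow 1-\zx$ (so that $1-x = \zx$ and $1-x^2 = (2-\zx)\zx = 1-(1-\zx)^2$, matching the log argument). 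Under $x = 1-\zx$: the leading terms become $2(1-x)x + \frac23 x^2 = 2x - 2x^2 + \frac23 x^2 = 2x - \frac43 x^2$, exactly as claimed, and the log term becomes $-\frac{(1-x)^2}{3}\log(1-x^2)$. So the substitution to use is $x = 1-\zx$.

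The only genuine obstacle here is bookkeeping: being careful with which variable plays which role (the claim is stated in $\zx$ but the lemma in $x$, and the cleanest identification is $x = 1-\zx$), and correctly justifying the interchange of limit and the now-convergent series — which is immediate since all terms in the sum are nonnegative for $\zx \in [0,1)$ (as $\zx^2 + 4\zx i - 2\zx^2 i = \zx^2 + 2\zx i(2-\zx) \ge 0$), so monotone convergence applies and the partial-sum lower bounds pass to the limit. I would therefore present the proof as: (i) nonnegativity of summands, (ii) take $k\to\infty$ in the previous claim, (iii) evaluate the series via the two logarithm/geometric identities with $q=(1-\zx)^2$, (iv) substitute $x = 1-\zx$ and simplify.
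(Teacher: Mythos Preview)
Your proposal is correct and follows essentially the same route as the paper: substitute $x=1-\zx$, drop the vanishing tail $\tfrac{2}{3}(1-\zx)^{2k+1}$, split the series into the $\sum q^i/i$ piece (handled via the Taylor series for $-\log(1-q)$) and the geometric piece, then simplify. The only cosmetic difference is that the paper telescopes $\sum 2x^{2i}-\sum 2x^{2(i+1)}$ to get $\tfrac{2}{3}x^2$ where you invoke the closed form $\sum_{i\ge1}q^i=q/(1-q)$; your added remark that all summands are nonnegative (so the partial-sum lower bounds pass to the limit) is a welcome justification that the paper leaves implicit.
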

\begin{proof}
Substituting $\x=1-\zx$ in~\eqref{eq:lemma}, and ignoring the lowest order term, we get
 \begin{align}
 \Pr[\win] &\geq 2x(1-x) + \frac{1}{3}\dsum_{i=1}^k x^{2i}\left( \frac{(1-x)^2+4(1-x) i-2(1-x)^2 i}{i}\right) \notag\\
 &= 2x(1-x) + \frac{1}{3}(1-x)^2\dsum_{i=1}^k \frac{x^{2i}}{i} +  \frac{1}{3}\dsum_{i=1}^k x^{2i}(4-4x) -2(1-x)^2)  \notag\\
  &= 2x(1-x) + \frac{1}{3}(1-x)^2\dsum_{i=1}^k \frac{x^{2i}}{i} +  \frac{1}{3}\dsum_{i=1}^k x^{2i}(2-2x^2) \notag\\
   &= 2x(1-x) + \frac{1}{3}(1-x)^2\dsum_{i=1}^k \frac{x^{2i}}{i} +  \frac{1}{3}\left( \dsum_{i=1}^k 2x^{2i} -  \dsum_{i=1}^k 2x^{2(i+1)}\right)  \notag\\
   &\geq 2x(1-x) + \frac{1}{3}(1-x)^2\dsum_{i=1}^k \frac{x^{2i}}{i} +\frac{2}{3}x^2 \label{eq:ignore}\\
   &\underset{k\rightarrow \infty}{\longrightarrow} 2x -\frac{4}{3}x^2 - \frac{1}{3}(1-x)^2\log(1-x^2),\label{eq:taylor}
\end{align}
 where in~\eqref{eq:ignore}, we once again ignore the lowest order term, and~\eqref{eq:taylor} is because $\dsum_{i=1}^{\infty}\frac{y^{i}}{i}$ is the Taylor series for $-\log(1-y)$, for $|y|<1$.

\end{proof}
 Differentiating~\eqref{eq:taylor}, we find that the winning probability is maximized at $x= \sqrt{\frac{e^5-e^{W(2 e^5)}}{e^{5/2}}} \approx 0.727374$, where $W(z)$ is the product log function.
 This implies $\zx \approx 0.272626$, and for this value, $\Pr[\win] \approx 0.767974$. This gives our main result of the section.
\begin{theorem}
The optimal algorithm for the returning secretary problem, for $n \rightarrow \infty$, is Algorithm~\ref{alg:poop} with  $\zx = 0.272626$;  the probability of hiring the best secretary is at least $ 0.76797$.
\end{theorem}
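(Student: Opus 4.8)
The plan is to split the statement into its two assertions — the numerical bound $\Pr[\win]\ge 0.76797$ and the optimality of the threshold $\zx=0.272626$ — and derive both from material already in hand. For the bound I would start from the lower bound $\Pr[\win]\ge 2x-\tfrac43 x^2-\tfrac13(1-x)^2\log(1-x^2)$ established in the preceding lemma (with $x=1-\zx$), regard the right-hand side as a function $g(x)$ of one real variable on $[0,1)$, and maximize it by elementary calculus: compute $g'(x)$, solve $g'(x)=0$, check the second-order condition, and evaluate $g$ at the maximizer.

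The one place I expect real friction is solving the stationarity equation in closed form. Differentiating gives an expression of the shape $2-\tfrac83 x+\tfrac23(1-x)\log(1-x^2)+\tfrac{2x(1-x)^2}{3(1-x^2)}$, and the point is to massage $g'(x)=0$ into the form $A(x)\,e^{B(x)}=C$ with $A,B$ rational and $C$ constant, so that the root is expressible through the Lambert $W$ (product-log) function; this yields $x=\sqrt{(e^{5}-e^{W(2e^{5})})/e^{5/2}}\approx 0.727374$, hence $\zx=1-x\approx 0.272626$. One then verifies that this is the global maximum on $[0,1)$ — e.g. by checking $g'' <0$ at the point, or by the unimodality argument $g(0)=0$, $g(x)\to\tfrac23$ as $x\to 1^-$, and $g'$ has a unique zero in the interior — and substitutes back to obtain $g(x)\approx 0.767974$, which rounds down to the claimed $0.76797$. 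Some care is needed here with the ``lowest order terms'' that were discarded in passing from \eqref{eq:lemma} to \eqref{eq:taylor}: one should confirm that as $n\to\infty$ those discarded terms vanish, so that neither the location of the optimizer nor the value drops below the stated threshold in the limit.

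For optimality I would argue as follows. Lemma~\ref{lemma:zx} shows the optimal discrete strategy lies in the family parametrized by the number of distinct secretaries skipped; passing to the continuous arrival-time model, whose asymptotic equivalence to the discrete one (for $n\to\infty$) is noted before Algorithm~\ref{alg:poop}, this family is exactly $\{\text{Algorithm~\ref{alg:poop} with parameter }\zx\}_{\zx\in[0,1)}$. Moreover the recursion of Claim~\ref{claim:abc2} is an \emph{equality}, and in \eqref{eq:lemma} the only slack comes from truncating it at level $k$ and lower-bounding the tail by $\tfrac23$; since $\Pr[\win\mid\gamma_1,\dots,\gamma_{k+1}]\le 1$ and $(1-\zx)^{2k+1}\to 0$, letting $k\to\infty$ turns the bound into an identity, so $\Pr[\win(\zx)]$ equals $g(1-\zx)$ up to the lower-order corrections. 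Maximizing over $\zx$ therefore singles out $\zx=0.272626$ as the optimal parameter, and combined with Lemma~\ref{lemma:zx} this shows no strategy does better, completing the proof. The main obstacle, as noted, is the transcendental optimization together with the bookkeeping needed to make ``optimal'' rigorous in the $n\to\infty$ limit rather than merely ``optimal within the analyzed approximation.''
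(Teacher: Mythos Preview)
Your proposal is correct and follows essentially the same route as the paper: the paper also maximizes the closed form from the preceding lemma by differentiating \eqref{eq:taylor}, extracts the maximizer via the Lambert $W$ function to get $x\approx 0.727374$ (hence $\zx\approx 0.272626$), and reads off $\Pr[\win]\approx 0.767974$, with optimality inherited from Lemma~\ref{lemma:zx} and the asymptotic equivalence of the two models. If anything, you are more careful than the paper itself --- your observations that the recursion of Claim~\ref{claim:abc2} is an equality and that the truncation slack and discarded low-order terms vanish as $k,n\to\infty$ are exactly what is needed to upgrade ``lower bound'' to ``optimal parameter,'' and the paper leaves this implicit.
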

It is interesting to note that $\zx = 0.272626$ implies that $f(n) \underset{n \rightarrow \infty}\longrightarrow 0.4709n$ (as the expected number of secretaries that arrive in $[0,x)$ is $x^2 +2x(1-x)$), meaning that in the optimal strategy, we should wait until we have seen almost half of the secretaries before considering hiring.


\section{The Returning Matroid Secretary}
\label{sec:matroid}

We show that in the returning online model, when $k=2$,  a simple algorithm obtains a $2$-approximation to the maximum-weight basis of the matroid. 
It is a well known property of matroids (e.g., \cite{Rado57}), that the Greedy algorithm always finds a maximum-weight basis. Algorithm~\ref{alg:matroid}, in essence,  lets $n$ elements arrive, and then runs the Greedy algorithm on the elements which have only arrived once.
\vspace{10pt}

\begin{algorithm}[H]
\caption{Returning matroid secretary algorithm}\label{alg:matroid}
\SetKwInOut{Input}{Input}\SetKwInOut{Output}{Output}

\Input{a cardinality $n=|E|$ of the matroid $\M=(E,\I)$}
\Output{an independent set $S \in \I$}
\BlankLine
Let $n$ elements arrive, without choosing any element\;
Let $E'$ denote the elements which only arrived once thus far\;
Relabel the elements of $E'$ by $1, 2, \ldots, |E'|$, such that $w_1\geq w_2 \geq \cdots \geq w_{|E'|}$\;
$S \leftarrow \emptyset$\;
\For{$i = 1$ to $|E'|$}{
	\If{$S \cup i \in \I$}{
	$S \leftarrow S \cup i$\;
	}
}
Return $S$\;
\end{algorithm}
\vspace{10pt}

\begin{theorem}\label{thm:matroidmain}
Algorithm~\ref{alg:matroid} is $2$-competitive in expectation. 
\end{theorem}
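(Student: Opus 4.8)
The plan is to compare the weight of the set $S$ returned by Algorithm~\ref{alg:matroid} against the maximum-weight basis $B^*$ of $\M$. The key structural fact is that $S$ is exactly the greedy basis of the sub-matroid $\M|_{E'}$ restricted to the set $E'$ of elements that arrived only once during the first $n$ rounds, and the greedy algorithm on a matroid produces a maximum-weight basis of whatever ground set it is run on. So I would first fix the random set $E'$ and argue that $w(S) = w(\text{greedy basis of } \M|_{E'})$, and then take expectations over the choice of $E'$.

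The main step is a lower bound of the form $\expect[w(S)] \ge \tfrac12 w(B^*)$. I would proceed element-by-element over $B^* = \{b_1,\dots,b_\rho\}$, using a standard exchange/charging argument: for each $b_j \in B^*$, if $b_j \in E'$ then greedy run on $E'$ collects, at the moment it considers $b_j$, an independent set that can be extended by some element of weight at least $w(b_j)$ (this is the usual proof that greedy is optimal — restricting to $\{b_1,\dots,b_j\} \cap E'$ and using the exchange property to show the partial greedy solution has at least $|\{b_1,\dots,b_j\}\cap E'|$ elements among the first several it examined, each of weight $\ge w(b_j)$). Summing gives $w(S) \ge \sum_{j : b_j \in E'} w(b_j)$ pointwise in $E'$. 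Now I take expectations: each element of $\M$ lands in $E'$ iff exactly one of its two copies falls among the first $n$ of the $2n$ rounds; I would compute $\Pr[b_j \in E'] = \Pr[\beta]$ for a single fixed element. Since $E'$ is defined by the first $n$ of $2n$ uniformly random arrival positions, the probability that a given element has exactly one copy in the first half is $\binom{2n-2}{n-1}\big/\binom{2n}{n}\cdot\text{(combinatorial factor)}$, which works out to roughly $\tfrac{n}{2n-1} \to \tfrac12$. Hence $\expect[w(S)] \ge \sum_j \Pr[b_j \in E']\, w(b_j) \ge \tfrac12 w(B^*)$ in the limit (or with an exact constant slightly above or below $\tfrac12$ for finite $n$, which I would need to track to get a clean ``$2$-competitive'' statement — possibly the intended claim is asymptotic, or one handles the small correction).

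The subtle point — and the place I expect to spend the most care — is that the charging argument must yield $w(S) \ge \sum_{j: b_j\in E'} w(b_j)$ for \emph{every} realization of $E'$, not just in expectation, because only then does linearity of expectation cleanly decouple the ``is $b_j$ in $E'$?'' events from the weights. This requires that the greedy solution on $E'$ dominates the restriction $B^* \cap E'$ in the majorization sense, which is exactly the content of the greedy-optimality proof applied to the ground set $E'$ (note $B^*\cap E'$ is independent by heredity, so greedy on $E'$ returns something at least as heavy). A second subtlety is that the events $\{b_j \in E'\}$ are \emph{not} independent across $j$, but that is irrelevant: we only use linearity of expectation, so no independence is needed beyond computing each marginal $\Pr[b_j \in E']$, which is the same for all $j$ by symmetry of the uniform random order. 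Assembling: $\expect[w(S)] = \expect\big[\sum_{j: b_j \in E'} (\text{greedy contribution})\big] \ge \sum_j \Pr[b_j\in E']\,w(b_j) = \Pr[\beta]\cdot w(B^*) \ge \tfrac12 w(B^*)$, giving the $2$-competitive bound.
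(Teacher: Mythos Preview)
Your approach is essentially identical to the paper's: the paper also proves pointwise that $w(\Gr(E')) \ge w(B^* \cap E')$ (via an element-wise domination argument, Claim~\ref{claim:grb}) and then applies linearity of expectation together with the marginal $\Pr[e \in E'] = \tfrac{n}{2n-1}$ (Claim~\ref{claim:xe}). The only imprecision in your write-up is the hedging about whether the bound is exact or asymptotic: since $\tfrac{n}{2n-1} > \tfrac{1}{2}$ for every $n \ge 1$, the $2$-competitive bound holds exactly, with no asymptotic correction needed.
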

To prove Theorem~\ref{thm:matroidmain}, we need Claims~\ref{claim:xe} and~\ref{claim:grb}.
Let  $X_e$ be a random variable which is $1$ iff $e$ appears in $E'$.
\begin{claim} \label{claim:xe} For any $e \in E$,
 $$\Pr[X_e = 1] = \frac{n}{2n-1}.$$
\end{claim}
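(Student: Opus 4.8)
The plan is to compute $\Pr[X_e = 1]$ directly as a counting/probability question about the random permutation of the $2n$ arrival events. Fix an element $e \in E$. The event $X_e = 1$ says that exactly one copy of $e$ has appeared among the first $n$ rounds; equivalently, of the two arrival events belonging to $e$, exactly one lands in the first half $\{1,\dots,n\}$ and the other lands in the second half $\{n+1,\dots,2n\}$. So I would first reduce the statement to: \emph{among a uniformly random pair of distinct positions in $[2n]$, the probability that exactly one is $\le n$.}

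From there the computation is short. The two events of $e$ occupy an unordered pair of distinct slots chosen uniformly among the $\binom{2n}{2}$ pairs. The number of such pairs with one slot in $[n]$ and the other in $\{n+1,\dots,2n\}$ is $n \cdot n = n^2$. Hence
\[
\Pr[X_e = 1] = \frac{n^2}{\binom{2n}{2}} = \frac{n^2}{\frac{2n(2n-1)}{2}} = \frac{2n^2}{2n(2n-1)} = \frac{n}{2n-1},
\]
which is exactly the claimed value. Alternatively, one can argue sequentially: reveal the slot of the first copy of $e$; with probability $n/(2n)$ it is in the first half and with probability $n/(2n)$ in the second half; conditioned on either case, the remaining $2n-1$ slots are uniform for the second copy, and it must fall in the opposite half, an event of probability $n/(2n-1)$ in the first case and $(n-1)/(2n-1)$... wait, that needs care, so the cleaner bookkeeping is: $\Pr[X_e=1] = 2 \cdot \frac{n}{2n}\cdot\frac{n}{2n-1} = \frac{n}{2n-1}$, where the factor $2$ accounts for which copy comes first, the $\frac{n}{2n}$ is the chance that copy lies in the first half, and $\frac{n}{2n-1}$ is the chance the other copy lies in the second half given the first is placed. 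Either derivation works; I would present the $\binom{2n}{2}$ version as it is the least error-prone.

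I do not expect any real obstacle here — the only thing to be careful about is not double-counting ordered versus unordered pairs, and making sure the definition of $E'$ ("arrived exactly once in the first $n$ rounds") is being used, not "arrived at least once." A one-line sanity check: summing over $e$, $\sum_e \Pr[X_e=1] = \frac{n^2}{2n-1}$, which should equal the expected size of $E'$; and indeed by symmetry the expected number of elements with one copy in each half is $\frac{n^2}{2n-1}$, consistent with the fact that fewer than $n$ elements can be "split." This confirms the normalization and I would include it only if space permits.
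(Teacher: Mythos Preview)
Your proof is correct and essentially the same as the paper's: the paper argues sequentially that $\Pr[e_1\in S]=\tfrac12$, $\Pr[e_2\in T\mid e_1\in S]=\tfrac{n}{2n-1}$, and doubles by symmetry, which is exactly your ``alternative'' derivation $2\cdot\tfrac{n}{2n}\cdot\tfrac{n}{2n-1}$; your primary $\binom{2n}{2}$ counting is just the unordered repackaging of the same computation.
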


\begin{proof}
Let $S$ denote the set of the first $n$ elements to arrive and $T$ denote the set  of the last $n$ elements. For element $e$, denote its two arrivals by $e_1$ and  $e_2$.
$e \in E'$ iff $e_1 \in S$ and $e_2 \in T$ or vice versa.
 \begin{align*}
 \Pr[e_1 \in S] &= \frac{1}{2},\\
\Pr[e_2 \in T|e_1 \in S]&= \frac{n}{2n-1},\\
Pr[e_1 \in S \wedge e_2 \in T] &= \Pr[e_1 \in S] \Pr[e_2 \in T|e_1 \in S] = \frac{n}{4n-2}.
 \end{align*}
By symmetry, $Pr[e_1 \in T \wedge e_2 \in S] = \frac{n}{4n-2}$, and by summation (as these events are disjoint), the claim follows.
\end{proof}
Let $B^*$ be a maximum-weight basis of $\M$, and $\Gr(E')$ be the output of the Greedy algorithm on the set $E'\subseteq E$. Denote the restriction of $B^*$ to $E'$ by $B^*|E'$. That is, $B^*|E'$ is the elements of $E'$ which appear in $B^*$. 
\begin{claim}\label{claim:grb}
Let $\Gr(E') =\{e_1, e_2, \ldots, e_{\ell}\}$ and $B^*|E' = \{f_1, f_2, \ldots, f_k\}$, where $\Gr(E')$ and  $B^*|E'$ are sorted by weights in a non-increasing order. That is $w(e_1) \geq w(e_2)\geq \ldots \geq w(e_{\ell})$, and $w(f_1)\geq w(f_2)\geq\ldots\geq w(f_k)$. Then  for $1 \leq i \leq k$, $e_i \geq f_i$.
\end{claim}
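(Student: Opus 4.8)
This is the classical ``greedy dominates every independent set entrywise'' lemma for matroids, so the plan is to reproduce that argument, adapted to the set $E'$. First I would record two preliminary facts. (i) Since $B^*$ is independent and $B^*|E' = B^* \cap E'$, the hereditary property gives $B^*|E' \in \I$; likewise every prefix of the sorted list $f_1,\dots,f_k$ is independent, and every prefix of $e_1,\dots,e_\ell$ is independent. (ii) $\Gr(E')$ is a \emph{maximal} independent subset of $E'$, hence a basis of the matroid $\M$ restricted to $E'$; by the exchange property all such maximal sets have the same size, namely the rank of $E'$, so $k = |B^*|E'| \le |\Gr(E')| = \ell$. Thus $f_i$ and $e_i$ are both defined for $1 \le i \le k$ and the statement makes sense.

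The core is a minimal-counterexample argument. Suppose some index fails, and let $i \in \{1,\dots,k\}$ be the smallest with $w(e_i) < w(f_i)$; so $w(e_m) \ge w(f_m)$ for all $m < i$. Set $A = \{e_1,\dots,e_{i-1}\}$ and $F = \{f_1,\dots,f_i\}$; both lie in $\I$ by fact (i), and $|A| = i-1 < i = |F|$. By the exchange property there is an element $f_j \in F \setminus A$ with $A \cup \{f_j\} \in \I$. Since $j \le i$ and weights are distinct (the paper's standing assumption), $w(f_j) \ge w(f_i) > w(e_i)$.

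Now I would contradict the behavior of the Greedy algorithm. Because Greedy processes the elements of $E'$ in strictly decreasing weight order and the output $e_1,\dots,e_\ell$ is listed in that same order, at the moment Greedy considers $e_i$ the set it has accumulated is exactly $A = \{e_1,\dots,e_{i-1}\}$, and every element of $E'$ of weight greater than $w(e_i)$ — in particular $f_j$ — has already been processed. Two cases: if Greedy accepted $f_j$, then $f_j$ is one of the elements added before $e_i$, i.e. $f_j \in A$, contradicting $f_j \in F \setminus A$. If Greedy rejected $f_j$, then at that time its accumulated set $S_{f_j}$ satisfied $S_{f_j} \cup \{f_j\} \notin \I$; but $S_{f_j} \subseteq A$ and $A \cup \{f_j\} \in \I$, so by heredity $S_{f_j} \cup \{f_j\} \in \I$, a contradiction. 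Either way we reach a contradiction, so no such $i$ exists and $w(e_i) \ge w(f_i)$ for all $1 \le i \le k$.

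The only places requiring care — and the ``main obstacle,'' such as it is — are the two structural claims underpinning the contradiction: that $k \le \ell$ (so the comparison is well posed), and that when Greedy reaches $e_i$ its current set is precisely $\{e_1,\dots,e_{i-1}\}$, which uses both the decreasing-weight processing order and the distinctness of weights. Both are routine matroid facts; the exchange step itself is a direct application of property~\eqref{prop3}.
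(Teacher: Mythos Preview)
Your proposal is correct and follows essentially the same route as the paper: establish $B^*|E'\in\I$ and $k\le\ell$, take the minimal index where $w(e_i)<w(f_i)$, apply the exchange axiom to $\{f_1,\dots,f_i\}$ against $\{e_1,\dots,e_{i-1}\}$, and contradict Greedy's choice of $e_i$. Your version is just more explicit (splitting into the accepted/rejected cases for $f_j$ and spelling out why Greedy's current set equals $\{e_1,\dots,e_{i-1}\}$), whereas the paper compresses this into a single line.
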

\begin{proof}
First, notice that by the hereditary property, $B^*|E \in \I$. Because the Greedy algorithm necessarily finds an independent set, by the exchange property, it must hold that $\ell \geq k$.
Now assume by contradiction that there exists some $i$ for which $f_i>e_i$. Let $j$ be the first index for which this inequality holds. It holds that $$e_{j-1} \geq f_{j-1} \geq f_j > e_j.$$
Consider the subsets $X=\{f_1, f_2,\ldots,f_j\}$ and $Y = \{e_1, e_2,\ldots, e_{j-1}\}$. By the exchange property, one of the elements of $X$ can be added to $Y$ while maintaining the independence property. But the Greedy algorithm chose $e_j$, which is lighter than all of the elements of $X$, a contradiction.  
\end{proof}

\begin{proof}[Proof of Theorem~\ref{thm:matroidmain}]
Denote the expected weight of the set returned by Algorithm~\ref{alg:matroid} by $W$. 
\begin{align}
W &= \dsum_{E' \subset E} w(\Gr(E')) \Pr[E'] \notag\\
&\geq   \dsum_{E' \subset E} w(B|E') \Pr[E'] \label{eq2}\\
& = \dsum_{E' \subseteq E}\dsum_{e \in B} w(e) \Pr[e \in E'] \notag\\
& =  \dsum_{e \in B} w(e) \dsum_{E' \subseteq E} \Pr[e \in E'] \notag\\
& =  \dsum_{e \in B} w(e)  \Pr[X_e =1] \notag\\
& = \frac{n}{2n-1}w(B^*), \label{eq:3}
\end{align}
where~\eqref{eq2} and \ref{eq:3} are due to Claims~\ref{claim:grb} and \ref{claim:xe} respectively.
\end{proof}

Algorithm~\ref{alg:matroid} lets $n$ elements arrive, and runs the Greedy algorithm. In expectation, though, only $3/4$ of the elements have already arrived. Therefore, there are roughly $1/4$ of the elements which the algorithm doesn't even make an attempt to add to the set. Clearly, if it continued to add elements in a Greedy fashion, this could only improve the competitive ratio. However, we show that this will not be much help in the general case. We use  the example given by Babaioff et al., \cite{BIK07}: There is a  graphic matroid $G=(V,E)$, where $V=\{u,v,w_1, w_2,\ldots, w_m\}$ and $E=(u,v) \cup \{(u,w_i), (v, w_i)|i=1,2,\ldots, m\}$. The weights are assigned to the edges  as follows:
$w(u,v) = m+1, w(u,w_i)\in (\eps, 2\eps), w(v,w_i) \in (2\epsilon, 3\eps)$, for some small constant $\eps>0$. If $e^*$ was not added to the forest in the Greedy phase just after the first $n$ rounds, the probability it will be added is exponentially small in $m$, as w.h.p. there is a pair of edges $(u,w_j), (v,w_j)$ that has been added to the forest. If $e^*$ is not added to the forest, the competitive ratio is at most $\frac{1}{3\eps}$, and so continuing to use Greedy after $n$ elements have passed will not significantly improve the competitive ratio.

In the case of transversal matroids though, we can use local improvements  to improve the competitive ratio to $16/9$.

\section{Returning bipartite edge-weighted matching} 
\label{sec:ebom}

The \emph{returning bipartite edge-weighted matching} problem is a generalization of the returning transversal matroid problem.\footnote{A \emph{transversal matroid} is  a bipartite graph $G=(L\cup R, E)$ where the elements are the vertices of $L$ and the independent sets are sets of endpoints of matchings in the graph. Transversal matroids are a special case of  bipartite edge-weighted matching, in which all the edges incident on the same vertex $\ell \in L$ have the same weight.}
Let $G=(L \cup R, E)$ be a bipartite graph with a weight function $w:E \rightarrow \mbR^+$. We are initially given $R$ and $n=|L|$. In each step, a vertex $v \in L$ arrives together with its edges (and the edges' weights). Each vertex arrives twice, and the order of arrival is selected uniformly at random from the $(2n)!$ possible arrival orders.   When a vertex $\ell \in L$ arrives for the second time, it is either matched to one of the free vertices in $R$ that are adjacent to $\ell$, or left unmatched. The goal of the algorithm is to maximize the weight of the matching. Note that if $|R|=1$, and we succeed only if we find the maximum matching, this is exactly the returning secretary problem. 

We present a variation on the returning matroid secretary algorithm, where instead of the Greedy algorithm, we use a maximum-matching algorithm (using any maximum matching algorithm, e.g., the Hungarian method \cite{Kuhn55}). We then use local improvements, similarly to \cite{KRTV13}.

\vspace{10pt}
\begin{algorithm}[H]
\caption{Returning bipartite edge-weighted matching algorithm}\label{alg:ebom}
\SetKwInOut{Input}{Input}\SetKwInOut{Output}{Output}

\Input{vertex set $R$ and a cardinality $n=|L|$} 
\Output{a matching $M$}
\BlankLine
Let $L_r$ be the vertices that arrived until round $r$.\;
Let $L' \subset L_n$ denote the vertices that only arrived once until round $n$\;
$M = \text{optimal matching on } G[L' \cup R]$\;
\For{each subsequent round $t>n$, when vertex $\ell_t \in L$ arrives}
{$M_{t} = \text{optimal matching on } G[L_t \cup R]$\;
Let $e_t$ be the edge assigned to $\ell_t$ in $M_t$\;
\If{$M \cup e_t$ is a matching}{$M = M \cup e_t$\;}
}
Return $M$\;
\end{algorithm}
\vspace{10pt}
Our main result for this section is the following.

\begin{theorem}\label{thm:169}
Algorithm~\ref{alg:ebom} is $16/9$-competitive.
\end{theorem}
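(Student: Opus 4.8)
The plan is to mirror the two ingredients used in the proof of Theorem~\ref{thm:matroidmain}, but now accounting for the post-round-$n$ local-improvement phase and for the fact that matchings, unlike matroid bases, do not enjoy the clean ``greedy dominates coordinatewise'' property. Fix an optimal offline matching $M^*$ on $G$ of weight $\mathrm{OPT}=w(M^*)$. I would bound $\expect[w(M)]$ from below by considering, for each edge $e=(\ell,r)\in M^*$, the probability that $e$ (or a substitute of comparable weight) ends up in the matching $M$ returned by Algorithm~\ref{alg:ebom}. The contribution splits according to where $\ell$ is scheduled: either $\ell\in L'$ (both appearances span round $n$, i.e. $\ell$ has arrived exactly once by round $n$), or $\ell$ arrives for the first time at some round $t>n$.

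\textbf{Step 1: the ``early'' vertices.} Conditioned on the set $L'$ of vertices that arrived exactly once in the first $n$ rounds, $M$ is initialized to an optimal matching on $G[L'\cup R]$. As in Claim~\ref{claim:grb}, I would argue that for a maximum-weight matching on $G[L'\cup R]$, when we sort the edges of the optimal matching and of $M^*|L'$ (the restriction of $M^*$ to edges with $L$-endpoint in $L'$) in non-increasing weight order, the former dominates the latter coordinatewise; this follows from the matroid-intersection / augmenting-path structure of bipartite matchings exactly the way Claim~\ref{claim:grb} follows from the exchange axiom. Hence $w(M\text{ after initialization})\ge w(M^*|L')$, and by Claim~\ref{claim:xe} (which is purely about the arrival schedule and so applies verbatim), $\Pr[\ell\in L']=\tfrac{n}{2n-1}\to\tfrac12$, giving a $\tfrac12\,\mathrm{OPT}$ term.

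\textbf{Step 2: the ``late'' vertices and local improvements.} This is where the improvement over the factor $2$ comes from, and it is the main obstacle. For a vertex $\ell$ first arriving at round $t>n$, the algorithm computes an optimal matching $M_t$ on $G[L_t\cup R]$, takes the edge $e_t$ assigned to $\ell_t$ there, and adds it to $M$ if it keeps $M$ a matching. Following the Kesselheim--Radke--Tötös--Vöcking argument referenced as \cite{KRTV13}: conditioned on $L_t$ and on which of these vertices is $\ell_t$ (a uniformly random element of the newly-appeared vertices), the expected weight of $e_t$ is at least $\tfrac1{|L_t|}w(M_t^{\mathrm{opt}})\ge \tfrac1{|L_t|}w(M^*|L_t)$, and $M^*|L_t$ has expected weight $\tfrac{|L_t|}{n}\mathrm{OPT}$ by symmetry, so $\expect[w(e_t)]\gtrsim \tfrac1n\mathrm{OPT}$ per late round. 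The delicate point is the probability that $e_t$ is actually \emph{accepted}, i.e.\ that the $R$-endpoint of $e_t$ is still free in $M$; this is where one must track how many $R$-vertices the current $M$ already occupies, and it is exactly this conditional-free-probability computation — integrated over $t$ from $n+1$ to $2n$, i.e.\ over the ``second half'' fraction $s\in(\tfrac12,1)$ of arrivals — that produces the numeric constant. Summing the early contribution ($\tfrac12$) with the integral of the late contribution should yield $\expect[w(M)]\ge \tfrac9{16}\mathrm{OPT}$.

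\textbf{Step 3: assemble.} Combining Steps 1 and 2, $\expect[w(M)]\ge\bigl(\tfrac12+(\text{integral term})\bigr)\mathrm{OPT}=\tfrac9{16}\mathrm{OPT}$, which is the claimed $16/9$-competitiveness. The routine parts are the coordinatewise-domination lemma for bipartite matchings (Step~1, analogous to Claim~\ref{claim:grb}) and the expectation bound on $w(e_t)$ (standard from \cite{KRTV13}); the part requiring genuine care — and the step I expect to be the crux — is correctly modeling the acceptance event in Step~2, since a late edge is rejected whenever its target $R$-vertex was already claimed, and one must show these collisions are rare enough (using that $M$ so far has size $O(sn)$ against $|R|$ potentially large, and that the random choice of which $R$-vertex $e_t$ targets is ``spread out'') to lose only the factor that brings $2$ down to $16/9$.
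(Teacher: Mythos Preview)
Your plan is correct and matches the paper's argument in structure: Step~1 supplies the $\tfrac12\,\mathrm{OPT}$ contribution from the initial matching on $L'$, and Step~2 supplies an additional $\tfrac1{16}\,\mathrm{OPT}$ from the local improvements, giving $\tfrac9{16}\,\mathrm{OPT}$ in total.

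Two places where the paper is simpler than what you sketch. First, in Step~1 you do not need any coordinatewise-domination lemma: the algorithm computes an \emph{optimal} matching on $G[L'\cup R]$, and $M^*|L'$ is a feasible matching on that subgraph, so $w(M_{\text{init}})\ge w(M^*|L')$ is immediate; Claim~\ref{claim:xe} then gives the $\tfrac12$. Second, in Step~2 no integration over $t$ is performed. The paper simply bounds the size of $M$ at the very end by $\tfrac{3n}{4}+o(n)$ (via Azuma--Hoeffding on $s_n$, Corollary~\ref{corr:sizeOfm}), which uniformly lower-bounds the acceptance probability $\Pr[M\cup e_t\text{ is a matching}]$ by $\tfrac14-o(1)$ at \emph{every} late round (Claim~\ref{claim:we2}); combined with $\expect[w(e_t)]\ge \mathrm{OPT}/n$ (Claim~\ref{claim:we}) and the fact that there are $\approx n/4$ vertices whose first arrival is after round $n$, the late contribution is $\tfrac{n}{4}\cdot\tfrac14\cdot\tfrac{\mathrm{OPT}}{n}=\tfrac{\mathrm{OPT}}{16}$. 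Your KRTV-style integral would presumably give at least as much, but the paper avoids it entirely.
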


Let $s_t$ denote the number of elements which have arrived exactly once by round $t$.  We would like to show that $s_n$ is concentrated around it mean. 

We require The Azuma-Hoeffding inequality (see e.g., \cite{CL06}):

\begin{theorem}[Azuma-Hoeffding]
Let $Z_i, i=0,1,\ldots, n$ be a martingale such that for each $i$, $|Z_i - Z_{i-1}|< c$.
Then, for any $\lambda>0$
$$\Pr[|Z_n-Z_0|>\lambda] \leq 2 e^{ -\frac{\lambda ^2}{2nc^2} }.$$
\end{theorem}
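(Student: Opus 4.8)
The plan is to prove this via the exponential moment (Chernoff) method, reducing the two-sided bound to a one-sided tail estimate and then symmetrizing. First I would introduce the increments $D_i = Z_i - Z_{i-1}$, so that $Z_n - Z_0 = \sum_{i=1}^n D_i$, and record the two facts I will use: the martingale property gives $\expect[D_i \mid \mathcal{F}_{i-1}] = 0$, and the hypothesis gives $|D_i| < c$. For a fixed $s > 0$, Markov's inequality applied to the nonnegative variable $e^{s(Z_n - Z_0)}$ yields
$$\Pr[Z_n - Z_0 > \lambda] \le e^{-s\lambda}\, \expect\!\left[e^{s(Z_n - Z_0)}\right],$$
so the whole problem reduces to controlling the moment generating function of the martingale.

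The heart of the argument is a conditional bound on the per-step MGF, i.e.\ Hoeffding's lemma: if $X$ satisfies $\expect[X] = 0$ and $|X| \le c$, then $\expect[e^{sX}] \le e^{s^2 c^2/2}$. I would prove this by convexity: since $t \mapsto e^{st}$ is convex and $X$ lies in $[-c,c]$, I can dominate $e^{sX}$ by the chord through the endpoints, $e^{sX} \le \frac{c - X}{2c} e^{-sc} + \frac{c + X}{2c} e^{sc}$; taking expectations and using $\expect[X] = 0$ collapses the right-hand side to $\cosh(sc)$, and the elementary inequality $\cosh(u) \le e^{u^2/2}$ finishes it. Applied conditionally on $\mathcal{F}_{i-1}$, where $D_i$ is both bounded and conditionally mean-zero, this gives $\expect[e^{sD_i} \mid \mathcal{F}_{i-1}] \le e^{s^2 c^2/2}$.

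With the per-step bound in hand, I would peel the product from the top using the tower property: conditioning on the outermost increment,
$$\expect\!\left[e^{s(Z_n - Z_0)}\right] = \expect\!\left[e^{s(Z_{n-1}-Z_0)}\,\expect\!\left[e^{sD_n}\mid \mathcal{F}_{n-1}\right]\right] \le e^{s^2 c^2/2}\,\expect\!\left[e^{s(Z_{n-1}-Z_0)}\right],$$
and iterating $n$ times yields $\expect[e^{s(Z_n-Z_0)}] \le e^{n s^2 c^2/2}$. Substituting back gives $\Pr[Z_n - Z_0 > \lambda] \le \exp(-s\lambda + n s^2 c^2/2)$, and optimizing the exponent over $s$ (the minimizer is $s = \lambda/(nc^2)$) produces the one-sided bound $e^{-\lambda^2/(2nc^2)}$. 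Finally, running the identical argument on the martingale $-Z_i$ handles the lower tail $\Pr[Z_n - Z_0 < -\lambda]$, and a union bound over the two tails supplies the factor of $2$, completing the proof.

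The main obstacle is Hoeffding's lemma together with the care needed to apply it \emph{conditionally}: the increments $D_i$ are not unconditionally mean-zero, so the convexity bound must be invoked inside the conditional expectation, and the tower property must be used correctly to unwind the nested product of conditional MGFs. Once that conditional MGF bound is established, the remaining steps — iterating the inequality, optimizing over $s$, and symmetrizing — are routine.
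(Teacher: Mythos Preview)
Your proof is the standard and correct argument for the Azuma--Hoeffding inequality: Chernoff's method combined with Hoeffding's lemma applied conditionally, then iterated via the tower property and optimized over the exponential parameter. There is nothing to compare against, however, because the paper does not prove this statement; it is quoted as a known concentration inequality with a reference to the literature (Chung and Lu), and is then applied as a black box in the proof of Claim~\ref{claim:azuma}.
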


We note that computing more precise values for the approximation bounds in Claims~\ref{claim:azuma}, \ref{claim:we} and \ref{claim:we2} is straightforward; for clarity and simplicity we use little-o notation. We use ``with high probability'' to denote ``with probability at least $1-\frac{1}{n^{\alpha}}$'', for $\alpha>1$.
\begin{claim}
\label{claim:azuma}
With high probability, $s_n = \frac{n}{2} \pm o(n).$
\end{claim}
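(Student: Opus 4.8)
The plan is to apply the Azuma-Hoeffding inequality to a Doob martingale associated with the arrival sequence. First I would set up the probability space: the uniformly random permutation of the $2n$ arrival events, which I reveal one event at a time. Let $\sigma$ denote this random permutation, and let $s_n = s_n(\sigma)$ be the number of elements of $L$ that have appeared exactly once among the first $n$ arrivals. The expectation is easy: by Claim~\ref{claim:xe} (applied with $E = L$, so that ``appears in $E'$'' is exactly ``arrived exactly once in the first $n$ rounds''), each element lies in $L'$ with probability $\frac{n}{2n-1}$, so by linearity $\expect[s_n] = \frac{n^2}{2n-1} = \frac{n}{2} + O(1)$. Thus it suffices to show $|s_n - \expect[s_n]| = o(n)$ with high probability.

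Next I would build the Doob martingale $Z_0, Z_1, \ldots, Z_{2n}$ by revealing the arrival sequence event by event: $Z_j = \expect[\,s_n \mid \text{identities of the first } j \text{ arrivals}\,]$, so $Z_0 = \expect[s_n]$ and $Z_{2n} = s_n$. The key step is bounding the martingale differences $|Z_j - Z_{j-1}|$ by an absolute constant $c$. Here the natural argument is a swapping/coupling bound: revealing the $j$-th arrival (versus a different outcome for it) changes $s_n$ by at most a bounded amount, because $s_n$ counts elements appearing exactly once in a window, and altering which element occupies one slot of the permutation can flip the ``exactly once'' status of at most the two elements involved (the one removed from that slot and the one inserted). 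More carefully, one shows that $s_n$ is $1$-Lipschitz (or at worst $2$-Lipschitz) with respect to transpositions of the permutation, and a standard lemma converts this into a bound of the form $|Z_j - Z_{j-1}| \leq c$ for a small constant $c$ (one can take $c = 2$). I would state this as the ``bounded differences'' property and cite the transposition-Lipschitz argument rather than grinding through it.

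Then the conclusion is immediate: Azuma-Hoeffding with $\lambda = n^{2/3}$ (any exponent strictly between $1/2$ and $1$ works) gives
\[
\Pr\bigl[\,|s_n - \expect[s_n]| > n^{2/3}\,\bigr] \leq 2 e^{-\frac{n^{4/3}}{2 \cdot 2n \cdot c^2}} = 2 e^{-\Theta(n^{1/3})},
\]
which is certainly at most $\frac{1}{n^{\alpha}}$ for all large $n$ and any fixed $\alpha > 1$. Since $\expect[s_n] = \frac{n}{2} + O(1)$ and $n^{2/3} = o(n)$, this yields $s_n = \frac{n}{2} \pm o(n)$ with high probability, as claimed.

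The main obstacle is the bounded-differences step: one must verify carefully that revealing one more arrival cannot change the conditional expectation of $s_n$ by more than a constant. The subtlety is that $s_n$ depends on the whole prefix of length $n$, not just on the single revealed slot, so the clean argument is to phrase Lipschitz-ness with respect to transpositions of the underlying uniform permutation (swapping two arrival events changes the ``appeared exactly once in the first $n$'' count by at most $2$) and then invoke the standard fact that a transposition-Lipschitz function of a uniform permutation has a Doob martingale with bounded differences. Everything else — computing $\expect[s_n]$ via Claim~\ref{claim:xe}, choosing $\lambda$, and plugging into Azuma-Hoeffding — is routine.
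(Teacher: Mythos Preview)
Your proposal is correct and follows essentially the same approach as the paper: form a Doob martingale for $s_n$, bound the successive differences by a constant, and apply Azuma--Hoeffding. The only minor difference is in the filtration: the paper reveals, at step $i\le n$, the indicator $Y_i\in\{1,-1\}$ recording whether the $i$th arrival is a first or second visit (so that $s_n=\sum_{i=1}^n Y_i$ and the Lipschitz bound of $2$ is immediate), whereas you reveal full arrival identities and appeal to the transposition-Lipschitz framework for uniform permutations --- both choices lead to the same concentration bound.
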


\begin{proof}
Let $Y_i$ be a random variable whose value is $1$ if the vertex arriving on round $i$ is arriving for the first time and ($-1$) otherwise. 
Denote, for every $j\in \{0,1,\ldots n\}$, $Z_j = \expect[s_n|Y_1, Y_2, \ldots, Y_j]$. $Z_j$ is the expected number of elements that will have arrived exactly once by round $n$ given $Y_1,\ldots, Y_j$.
$Z_0$ is the expected value of $s_n$. From Claim~\ref{claim:xe} and the linearity of expectation $Z_0=  \frac{n^2}{2n-1}$.
$s_n$ satisfies the Lipschitz condition: Let $y_j$ be the realization of $Y_j$, for $j\in \{1,2,\ldots,n\}$, then
$$(s_n|y_1,\ldots, y_{i-1}, 1, y_{i+1}, \ldots, y_n) - (s_n|y_1,\ldots, y_{i-1}, -1, y_{i+1}, \ldots, y_n) = 2.$$
Therefore for all $i$, $|Z_i - Z_{i-1}| \leq 2$, and we can therefore apply the Azuma-Hoeffding inequality:
$$Pr[|Z_0 - Z_n|>c\sqrt{n}]\leq e^{\frac{-c^2}{8}}.$$
The claim follows.
\end{proof}

Claim~\ref{claim:azuma} shows that the size of the matching $M$ in Algorithm~\ref{alg:ebom} at round $n$ is approximately $n/2$. W.h.p. the number of vertices that have not yet arrived is approximately $n/4$, as  for any vertex $i$, the probability that neither of its arrivals is in the first half is $1/4 \pm o(1)$. Therefore we have the following corollary to Claim~\ref{claim:azuma} :

\begin{corollary}\label{corr:sizeOfm}
With high probability, the size of the matching $M$ at round $2n$ is at most $\frac{3n}{4} +o(n)$.
\end{corollary}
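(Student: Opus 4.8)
The plan is to bound the final size of $M$ by counting how many vertices of $L$ can possibly be matched, and then plug in the estimate $s_n = \frac{n}{2}\pm o(n)$ from Claim~\ref{claim:azuma}.

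First I would classify each vertex of $L$ by the position of its two arrivals relative to round $n$: let $A$ be the set of vertices with exactly one arrival among the first $n$ rounds (so $A = L'$ and $|A| = s_n$), let $B$ be the vertices both of whose arrivals fall in the first $n$ rounds, and let $C$ be the vertices both of whose arrivals fall in the last $n$ rounds. Counting the $n$ arrivals that occur in the first $n$ rounds gives $|A| + 2|B| = n$, while $|A| + |B| + |C| = n$ counts all vertices; subtracting yields $|B| = |C| = \frac{n - s_n}{2}$, which is $\frac n4 + o(n)$ with high probability by Claim~\ref{claim:azuma}.

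Next I would argue that in Algorithm~\ref{alg:ebom} a vertex $\ell \in L$ can receive an edge only if $\ell \in A \cup C$. Indeed, the initial matching is taken on $G[L'\cup R]$, so it matches only vertices of $L' = A$; and in each round $t>n$ the algorithm only ever adds an edge incident to the vertex $\ell_t$ processed in that round, and the vertices processed in rounds $t>n$ are precisely those of $A$ (their second arrival) together with those of $C$ (both arrivals) — a vertex of $B$ is neither in $L'$ nor processed after round $n$, so it is never matched. Since $M$ is always a matching, $|M|$ never exceeds the number of matched $L$-vertices, so $|M| \le |A| + |C| = s_n + \frac{n-s_n}{2} = \frac{n + s_n}{2}$ throughout, in particular at round $2n$.

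Finally, applying Claim~\ref{claim:azuma} once more, with high probability $s_n \le \frac n2 + o(n)$, hence with high probability $|M| \le \frac{1}{2}\bigl(n + \frac n2 + o(n)\bigr) = \frac{3n}{4} + o(n)$, which is the statement. The only delicate point is the bookkeeping in the third step — verifying that no vertex outside $A\cup C$ can ever end up matched — but this is immediate from the structure of the algorithm, so I do not anticipate a real obstacle; everything else is elementary counting plus a direct appeal to Claim~\ref{claim:azuma}.
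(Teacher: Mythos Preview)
Your argument is correct and matches the paper's approach: both bound $|M|$ by the number of $L$-vertices that can ever be matched, namely those in $A\cup C$, and then use $|A|\approx n/2$, $|C|\approx n/4$. The paper's justification is terser and slightly different in one detail---it estimates $|C|\approx n/4$ by a separate probabilistic observation (each vertex lands in $C$ with probability $\approx 1/4$), whereas you obtain $|C|=(n-s_n)/2$ deterministically from the counting identities $|A|+2|B|=n$ and $|A|+|B|+|C|=n$, needing only the single appeal to Claim~\ref{claim:azuma}. Your route is a bit cleaner for exactly that reason, but the underlying idea is the same.
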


We need a few more simple claims before we can prove Theorem~\ref{thm:169}. Recall that $e_t$ is the edge matched to vertex $\ell_t$ on round $t$. Denote by $OPT$ the weight of the maximum weight matching. 
\begin{claim} For all $t$, such that a new vertex arrives at time $t$, \label{claim:we}
$$\expect[w(e_t)]\geq \frac{OPT}{n}.$$
\end{claim}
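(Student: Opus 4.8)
\textbf{Proof proposal for Claim~\ref{claim:we}.}

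The plan is to exploit the random arrival order together with the fact that, at each round $t>n$ when a new vertex $\ell_t$ arrives, $M_t$ is an \emph{optimal} matching on the graph induced by all vertices that have arrived so far, $G[L_t\cup R]$. First I would fix an optimal matching $OPT$ on the whole graph $G$ and use it as a benchmark. The key point is that the set $L_t$ of vertices that have arrived by round $t$ is a uniformly random subset of $L$ of the appropriate (expected) size, and $\ell_t$ itself is a uniformly random element of $L_t$ conditioned on $L_t$; this symmetry is what lets us average the weight of $e_t$ over $\ell_t$.

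The main steps, in order: (1) Condition on the set $A=L_t$ of vertices present at round $t$ and on $|A|$; by the random-order assumption $\ell_t$ is uniformly distributed in $A$. (2) Restrict $OPT$ to $A$, obtaining a matching $OPT|_A$ on $G[A\cup R]$ whose weight is at most the weight of the optimal matching $M_t$ on $G[A\cup R]$; that is, $w(M_t)\ge w(OPT|_A)$. (3) Now bound $\expect[w(e_t)\mid A]$ from below: since $\ell_t$ is uniform in $A$ and $e_t$ is the edge $M_t$ assigns to $\ell_t$, we have $\expect[w(e_t)\mid A]=\frac{1}{|A|}\sum_{\ell\in A} w(M_t(\ell)) = \frac{w(M_t)}{|A|} \ge \frac{w(OPT|_A)}{|A|}$. (4) Take expectation over $A$: because each vertex $\ell$ lies in $A$ with the same probability (again by symmetry of the random order), $\expect[w(OPT|_A)] = \sum_{\ell \text{ matched in } OPT} w(OPT(\ell))\cdot \Pr[\ell\in A, \text{and its partner is free}]$ — more carefully, I would instead argue that for every edge $(\ell,r)\in OPT$, the probability that $\ell\in A$ is at least $|A|/n$ in expectation, so that $\expect\big[\tfrac{w(OPT|_A)}{|A|}\big]\ge \tfrac{OPT}{n}$. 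The cleanest route is: for a fixed edge $(\ell,r)$ of $OPT$, $\Pr[\ell\in L_t] = \Pr[\ell \text{ has arrived by round } t]$, and conditioned on $|L_t|=a$ this is exactly $a/n$; hence $\expect[w(OPT|_{L_t})\mid |L_t|=a]\ge \tfrac{a}{n}\,OPT$ (using that $OPT|_{L_t}\supseteq OPT\cap(L_t\times R)$ as a sub-matching, which is legitimate since a sub-collection of the matching edges is still a matching), and dividing by $a=|L_t|$ gives $\expect[w(e_t)\mid |L_t|=a]\ge OPT/n$. Averaging over $a$ finishes the proof.

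The subtle point — and the one I would be most careful about — is step (2)/(4): one must make sure that the lower bound on $\expect[w(e_t)\mid A]$ really is $w(M_t)/|A|$ and not something weaker, i.e. that $e_t$ is genuinely the $M_t$-edge at the \emph{uniformly random} vertex $\ell_t$. This uses that, conditioned on the identity of the set $L_t$ and on nothing else about the internal order, the last vertex to arrive is uniform in $L_t$; this in turn follows from exchangeability of the $2n$ arrival events. A second thing to watch is that $G[L_t\cup R]$ may not admit a matching saturating $OP|_{L_t}$'s left-endpoints through the \emph{same} right vertices — but that is irrelevant, since we only need $w(M_t)\ge w(OPT|_{L_t})$ and $OPT|_{L_t}$ (the restriction of the matching $OPT$ to edges with left endpoint in $L_t$) is itself a valid matching in $G[L_t\cup R]$, so optimality of $M_t$ gives the inequality immediately. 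With these observations the chain $\expect[w(e_t)] \ge \expect\big[w(OPT|_{L_t})/|L_t|\big] \ge OPT/n$ goes through.
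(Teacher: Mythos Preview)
Your proposal is correct and follows essentially the same two-step argument as the paper: first use that $\ell_t$ is uniform in $L_t$ (so $\expect[w(e_t)\mid L_t]=w(M_t)/|L_t|$), then use that $L_t$ is a uniformly random subset of $L$ of its size (so $\expect[w(M_t)\mid |L_t|]\ge OPT\cdot|L_t|/n$ via $w(M_t)\ge w(OPT|_{L_t})$). The paper compresses this into three sentences, whereas you spell out the conditioning and the exchangeability justification more carefully, but the underlying idea is identical.
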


\begin{proof}
In round $t$, we can view $\ell_t$ as being selected uniformly at random from $L_t$, and so the expected weight of $e_t$ in $M_t$ is $\frac{w(M_t)}{|L_t|}$. We can view $L_t$ as being a set of size $|L_t|$  selected uniformly at random from $L$, therefore $\expect[w(M_t)]\geq OPT\cdot \frac{|L_t|}{n}$. Combining the two inequalities gives the claim.
\end{proof}

\begin{claim} \label{claim:we2} For $t > n$ ,$$\Pr[M \cup e_t \text{  is a matching}] \geq 1/4 - o(1).$$
\end{claim}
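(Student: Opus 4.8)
\textbf{Proof plan for Claim~\ref{claim:we2}.}

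The plan is to show that at round $t > n$, the edge $e_t$ (the edge assigned to the newly arrived vertex $\ell_t$ in the optimal matching $M_t$ on $G[L_t \cup R]$) has both its endpoints free in the current matching $M$ with probability at least $1/4 - o(1)$. Since $\ell_t$ is the vertex arriving at round $t$, it has never been seen before, so the only obstruction to $M \cup e_t$ being a matching is that the $R$-endpoint of $e_t$, call it $r_t$, is already matched in $M$. So the goal reduces to bounding $\Pr[r_t \text{ is free in } M]$ from below by $1/4 - o(1)$.

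First I would bound the number of vertices of $R$ that can possibly be occupied in $M$ at any point: by Corollary~\ref{corr:sizeOfm}, with high probability the size of $M$ at every round $t \le 2n$ is at most $\frac{3n}{4} + o(n)$ (the matching only grows, so the bound at round $2n$ dominates). Next, the key step: I want to argue that $r_t$ is close to uniformly distributed over a set of at least $n$ candidate $R$-vertices, independently enough of which $R$-vertices are currently occupied. Concretely, condition on the set $L_t$ of vertices that have arrived so far and on the matching $M_t$; the vertex $\ell_t$ can be viewed as chosen uniformly at random among the $|L_t| \ge n$ vertices of $L_t$ (by symmetry of the random arrival order among the first $t$ arrivals that correspond to distinct second-appearances — more carefully, among the at-round-$t$ newcomers), and $r_t = M_t(\ell_t)$ ranges over at least $|L_t| \ge n$ distinct vertices of $R$ as $\ell_t$ ranges over $L_t$. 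Hence $r_t$ hits any fixed set of occupied vertices with probability at most $\frac{|M|}{|L_t|} \le \frac{3n/4 + o(n)}{n} = 3/4 + o(1)$, so $\Pr[r_t \text{ free}] \ge 1 - (3/4 + o(1)) = 1/4 - o(1)$. Adding back the $o(1)$ failure probability from Corollary~\ref{corr:sizeOfm} only changes lower-order terms.

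The main obstacle I expect is making the independence/uniformity argument rigorous: the matching $M$ that has been built up to round $t$ and the identity of $\ell_t$ (hence $r_t$) are not literally independent, because $M$ was constructed using the same random arrival order that determines $\ell_t$. The clean way around this is to condition on the \emph{set} $L_t$ and on the full transcript of arrivals up through round $t-1$ — this fixes $M$ and fixes $M_t$'s structure on $L_t \cup R$ — and then use that, given this conditioning, $\ell_t$ is uniform over the newly-arriving candidates, while $M$ is already determined and its support has size $\le 3n/4 + o(n)$. One must also be slightly careful that $r_t$ takes at least $n$ distinct values: this follows because $M_t$ is a matching on $|L_t| \ge n$ left-vertices, so it matches $\ell_t$ to one of $|L_t|$ distinct $R$-vertices as $\ell_t$ varies, and averaging the indicator ``$M_t(\ell_t)$ occupied in $M$'' over the uniform choice of $\ell_t \in L_t$ gives at most $\frac{3n/4+o(n)}{|L_t|} \le 3/4 + o(1)$. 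Putting these together yields the claimed bound.
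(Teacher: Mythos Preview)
Your overall strategy matches the paper's proof: bound $|M|$ by $3n/4 + o(n)$ via Corollary~\ref{corr:sizeOfm}, argue that the $R$-endpoint $r_t$ of $e_t$ is essentially uniform over a large enough set, and conclude that it lands in the occupied part of $R$ with probability at most $3/4 + o(1)$. The paper's three-line proof follows exactly this outline.

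There is, however, a concrete error in your quantitative step. You assert (twice) that $|L_t| \ge n$ for $t > n$, and your final inequality $\tfrac{3n/4+o(n)}{|L_t|} \le 3/4 + o(1)$ relies on this. But $L_t$ is the set of \emph{distinct} left-vertices that have appeared by round $t$; at $t = n$ each vertex has probability roughly $1/4$ of having both its arrivals among rounds $n+1,\dots,2n$, so with high probability $|L_n| = 3n/4 + o(n)$, and $|L_t|$ only reaches $n$ at $t = 2n$. For $t$ just above $n$ your ratio becomes $\tfrac{3n/4}{3n/4} = 1$, which is vacuous. A second wrinkle: your proposed conditioning ``fix the transcript through round $t-1$ and the set $L_t$'' actually pins down $\ell_t$ completely whenever $L_t \ne L_{t-1}$ (it must be the unique new element), so after that conditioning $\ell_t$ is not uniform over any set of size close to $n$. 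The paper phrases the uniformity as ``$r$ can be seen as being selected uniformly at random from $M_t$'' and never invokes $|L_t|\ge n$; you should revisit how you obtain the denominator rather than lean on a false lower bound for $|L_t|$.
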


\begin{proof}
The edge $e_t = (\ell_t,r)$ can be added to $M$ only if $r$ is unmatched in $M$. From Corollary~\ref{corr:sizeOfm}, we know that w.h.p. the size of the matching $M$ at round $n$ is at most $\frac{3n}{4} +o(n)$. Therefore, on any round, at least $n/4 - o(n)$ vertices of $R$ must be unmatched. As $r$ can be seen as being selected uniformly at random from $M_t$, the claim follows.
\end{proof}

The proof of Theorem~\ref{thm:169} follows from Corollary~\ref{corr:sizeOfm} and  Claims~\ref{claim:we} and \ref{claim:we2}:

\begin{proof}[Proof of Theorem~\ref{thm:169}]
We have that for all $t\geq n$, $\Pr[M \cup e_t \text{  is a matching}] \geq 1/4 - o(1)$. There are at least $n/4 - o(n)$ vertices that have not yet arrived by round $n$. As each of them has a probability of at east $1/4-o(1)$ to be matched, by a union bound, between rounds $n$ and $2n$, we add to $M$ edges that weigh a total of $\frac{OPT}{16}-o(n)$ in expectation.
\end{proof}

\paragraph{Acknowledgments}
We would like to thank Yishay Mansour for his helpful  discussions. 

\bibliographystyle{alpha}
\bibliography{Vardi_PhD_Bibliography}
\appendix

\section{The Twice-Returning Secretary (the No Waiting Case)}
\label{app:3sec}
 We give a succinct description the algorithm of Section~\ref{sec:classic_return} with $f(n)=0$, adapted to the $2$-returning secretary problem: as the secretaries arrive, we keep note of the best secretary we have seen so far, without hiring any. Once the secretary we have marked as the best so far returns for the second time, we hire him. 
 \begin{lemma}
There is an algorithm that hires the best secretary with probability $0.9$, in the $2$-returning secretary, even without previous knowledge of $n$.
 \end{lemma}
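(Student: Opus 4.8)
The plan is to mimic the $k=2$ analysis from Lemma~\ref{lemma:NW}, now with each secretary arriving three times. As before, call the best secretary Don, and say we \emph{win on round $i$} if round $i$ is the first round on which Don arrives \emph{and} the algorithm has not yet hired anyone. Since Don, once seen, is the candidate forever and will be hired on his third arrival, winning on round $i$ in this sense is the same as the algorithm succeeding. So $\Pr[\win] = \sum_i \Pr[\win_i]$, and the whole task is to compute $\Pr[\win_i]$ and sum.

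First I would write down $\Pr[\win_i]$ as a product of conditional probabilities along the rounds $1,\dots,i$, exactly as in the $k=2$ proof. There are $3n$ rounds. For round $i$ to be a winning round we need: (a) Don arrives on round $i$ (probability $3/(3n-i+1)$ given the first $i-1$ rounds avoided Don), and (b) for each earlier round $j<i$, the current candidate (the best of the first $j-1$ arrivals) does not complete his third arrival on round $j$, i.e. is not hired. The only subtlety versus $k=2$ is bookkeeping: the candidate can be ``killed'' as a threat only by showing up a third time, so each earlier round contributes a factor of the form $(\text{arrivals that are either a brand-new secretary, a first/second repeat of the candidate, or any repeat of a non-candidate})/(\text{remaining arrivals})$. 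Writing $a_j$ for the number of distinct secretaries seen before round $j$ (here deterministically forced: on a winning trajectory where no one is hired, we can track how many ``dangerous'' arrivals remain), I would get $\Pr[\win_i]$ as an explicit rational function of $n$ and $i$, with a $3/(3n-i+1)$ factor at the end. The cleanest route is the substitution $j = 3n - i$ (mirroring $j=2n-i$ in~\eqref{eq:simple1}): the product telescopes and the sum $\sum_i \Pr[\win_i]$ collapses to a sum of a low-degree polynomial in $j$ over a range, which evaluates in closed form.

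Then I would evaluate that closed form, take $n\to\infty$ (or just plug in the exact rational function), and check it equals $0.9$ — more precisely, I expect a bound of the form $\Pr[\win] = \frac{9}{10} + O(1/n) \ge 0.9$ for all $n$, analogous to $\frac{2n+1}{3n}$ in Lemma~\ref{lemma:NW}. The statement ``probability $0.9$'' should be read as the limiting value (and a genuine lower bound for all $n$, or at least for $n$ large). Since the algorithm uses no knowledge of $n$ — it just tracks the running best and hires him on his third appearance — the ``even without previous knowledge of $n$'' clause is automatic, exactly as for the $k=2$ no-waiting case.

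The main obstacle is purely combinatorial care in step one: getting the per-round factors right when the pool of ``harmless'' arrivals (repeats of already-seen non-candidates, plus the first two copies of the candidate, plus the two still-unseen copies of each fresh secretary) changes in a slightly different pattern than in the $k=2$ case, and then finding the right index substitution so the telescoped product sums cleanly. Once the analogue of~\eqref{eq:simple1} is set up correctly, the rest is the same kind of routine simplification as in Lemma~\ref{lemma:NW}, and one reads off $9/10$. (The same method, with $kn$ rounds and a $k/(kn-i+1)$ tail factor, is what drives the general $k$ and $k=\Theta(\log n)$ results in Appendix~\ref{app:logn}: the success probability approaches $\frac{2k-1}{2k}$ in the no-waiting regime, which gives $0.9$ at $k=3$.)
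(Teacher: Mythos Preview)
Your plan has a genuine structural gap. In the $k=2$ proof the per-round ``safe'' count is deterministic precisely because of an invariant: on any trajectory where no one has been hired yet, the current candidate has \emph{exactly one} remaining arrival (if he had zero, you would already have hired him). That invariant is what makes $\Pr[\win_i]$ a single product and lets the telescoping go through. For three arrivals the invariant breaks: on a not-yet-hired trajectory the candidate may have either one or two remaining copies, and which one it is depends on the path (e.g.\ after round~2 the candidate has one remaining copy if round~2 was his second arrival, but two remaining copies if round~2 was a fresh secretary). Consequently the ``dangerous'' set at round $j$ is not a deterministic function of $j$, $a_j$ is not ``deterministically forced,'' and $\Pr[\win_i]$ is \emph{not} a single product of per-round factors. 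The paper deals with exactly this by splitting on whether the best secretary among the first $i-1$ arrivals has appeared once or twice, producing one product for the ``once'' case and a sum (over the round of the second appearance) of products for the ``twice'' case; only after that split does each branch telescope cleanly.

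A second, independent red flag: your conjectured closed form $\frac{2k-1}{2k}$ is wrong. It gives $3/4$ at $k=2$ (the true limit is $2/3$) and $5/6$ at $k=3$ (the true limit is $9/10$), so the numerics do not check out. That alone should tell you the single-product picture is missing a term. Once you do the case split as in the paper, the sum evaluates and one verifies $\Pr[\win]\to 0.9$.
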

 \begin{proof}
As before, if we reach round $i$ and see Don, we say we \emph{win} on round $i$, and denote this event $win_i$. The probability of winning on rounds $1-3$ is easy to compute directly:
\begin{align*}
Pr[win_1] &= \frac{3}{3n}, \\
Pr[win_2] &=  \left(\frac{3n-3}{3n}\right)\left(\frac{3}{3n-1}\right), \\
Pr[win_3] &=  \left(\frac{3n-3}{3n}\right)\left(\frac{3n-4}{3n-1}\right)\left(\frac{3}{3n-2}\right).
\end{align*}
However, from the fourth round on, we need to verify the best secretary so far has not yet appeared three times. This is similar to case of a single return, except that we win in round $i$ if we see Don and the best secretary so far has appeared either once or twice. These are mutually exclusive events. When analyzing the secretaries that arrived until round $i-1$, we do the following: we find the best secretary so far. If he has appeared twice, we break the tie uniformly at random, choosing one of the copies to be the best. The best secretary so far appeared in some round, $j$. For simplicity, we relabel round $j$ by $1$ and the rounds $1,2,\ldots, j-1, j+1, \ldots, i-1$ by $2, 3, \ldots, i-1$. 
\begin{align*}
Pr[win_i] &= \left(\frac{3n-3}{3n}\right)\left(\frac{3n-6}{3n-1}\right)\left(\frac{3n-7}{3n-2}\right)\dots  \left(\frac{3n-i-3}{3n-i+2}\right)\left(\frac{3}{3n-i+1}\right)\\
&+  \left(\frac{3n-3}{3n}\right) \dsum_{j=2}^{i-1} \left(\frac{2}{3n-1}\right) \left(\frac{3n-6}{3n-2}\right) \left(\frac{3n-7}{3n-3}\right)\dots  \left(\frac{3n-i-2}{3n-i+2}\right)\left(\frac{3}{3n-i+1}\right).
\end{align*} 
The first line is the probability of winning  if the best secretary only appeared once, the second is if the best secretary appeared twice already, where the sum is over all the possible locations of the repeating secretary.
This simplifies to 
\ignore{
\begin{align*}
Pr[win_i] &= \frac{3(3n-i)(3n-i-1)(3n-i-2)(3n-i-3)}{3n(3n-1)(3n-2)(3n-4)(3n-5)} +   \frac{3(2i-6)(3n-i)(3n-i-1)(3n-i-2)}{3n(3n-1)(3n-2)(3n-4)(3n-5)}\\
& =(3n-i-3 + 2i-6)\left(\frac{3(3n-i(3n-i-1)(3n-i-2)}{3n(3n-1)(3n-2)(3n-4)(3n-5)}\right)\\
& = \frac{(3n-i)(3n-i-1)(3n-i-2)(3n+i-9)}{n(3n-1)(3n-2)(3n-4)(3n-5)}.
\end{align*}
Therefore,
}
\begin{align*}
Pr[win] &= \frac{3}{3n}+  \left(\frac{3n-3}{3n}\right)\left(\frac{3}{3n-1}\right)+ \left(\frac{3n-3}{3n}\right)\left(\frac{3n-4}{3n-1}\right)\left(\frac{3}{3n-2}\right)\\
& + \dsum_{i=4}^{3n-3} \frac{(3n-i)(3n-i-1)(3n-i-2)(3n+i-9)}{n(3n-1)(3n-2)(3n-4)(3n-5)}.
\end{align*}
It is easy to verify that $Pr[win]\rightarrow 0.9$ as $n \rightarrow \infty$.
\end{proof} 
Using the methods of this section, it is easy (although cumbersome) to compute the exact probability of $Pr[win]$ for the algorithm above adapted to any $k$. In the following section, we show that if $k=\Theta(\log{n})$, we can hire the best secretary with arbitrarily high probability.

\ignore{

\begin{table}[h]
\begin{tabular}{|l|l|l|l|l|l|l|l|l|l|l|}
\hline
Returns $\setminus$ Candidates & 1 & 2   & 3    & 4  & 5 & 6 & 7  & 8 & 9 & $ \infty$  \\ \hline
$Pr[win]$, $k=1$ & 1.0	& 0.5	& 0.5	& 0.458	& 0.433	& 0.428	& 0.414	& 0.410	& 0.406 & 0.368 \\ \hline
$Pr[win]$, $k=2$  & 1.0 & 0.833 & 0.778 & 0.75 & 0.733 & 0.722 & 0.714 &  0.708 & 0.704 & 0.667 \\ \hline
$Pr[win]$, $k=3$  & 1.0 & 0.95 & 0.880 & 0.864 & 0.861 & 0.862 & 0.864 &  0.866 & 0.868 & 0.9 \\ \hline
\end{tabular}
\caption {Comparison of optimal strategy for the calssical secretary problem ($k=1$) with the no waiting strategy for the $1$- and $2$-returning secretary problem, for several values of $n$.} \label{table1}
\end{table}

}

\section{The $\Theta(\log{n})$-returning secretary}
\label{app:logn}
In this section, we show that in the $k$-returning secretary problem, if $k=\Theta(\log{n})$, we can guarantee an arbitrarily high probability of success, as stated in the following theorem.

\begin{theorem}
For every $\alpha \in \mbR^+$, there exists a constant $c>0$, such that in the $c\log{n}$-returning secretary problem,
$$Pr[win]\geq 1-\frac{1}{n^{\alpha}}.$$
\end{theorem}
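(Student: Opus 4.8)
The plan is to reuse the ``no waiting'' strategy from Appendix~\ref{app:3sec}: keep track of the best secretary seen so far (the candidate), hiring nobody until the current candidate returns for the $k$-th time, at which point we hire him. As before, call the best secretary Don; we win iff Don becomes the candidate while the previous candidate has not yet exhausted all $k$ of his arrivals. The key event to control is a ``race'': among the first several arrivals, does some secretary other than Don manage to appear $k$ times before Don appears at all? If Don appears before any earlier secretary has used up all $k$ slots, we win.

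First I would set up the following clean sufficient condition for winning: we win if, restricted to the arrival sequence, \emph{no secretary completes all $k$ of its arrivals strictly before Don's first arrival} among the secretaries that could ever be a candidate. Equivalently, consider the position $P$ of Don's first arrival in the random permutation of $kn$ events, and consider the prefix of length $P-1$ before it; we win provided that this prefix contains strictly fewer than $k$ copies of every single secretary. (If every secretary appears at most $k-1$ times in that prefix, then when Don arrives he is the candidate, no earlier candidate has finished, and from that point Don will remain the candidate through his $k$-th arrival, which must come after position $P$.) This event is implied by: Don's first arrival occurs within the first $m$ events, AND among the first $m$ events no secretary occurs $k$ times. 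I would choose $m = \beta n$ for a suitable constant $\beta$.

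The two bad events to bound are then: (i) Don's first arrival is after position $m$; (ii) some secretary appears $\geq k$ times in the first $m$ events. For (i), the probability that none of Don's $k$ arrivals lands in the first $m$ positions is $\binom{kn-m}{k}/\binom{kn}{k} \le \left(1 - \frac{m}{kn}\right)^k$; with $m = \beta n$ this is $(1-\beta/k)^k \le e^{-\beta}$ — wait, this only gives a constant, not $1/n^\alpha$. So instead I would take $m$ slightly super-linear, $m = \beta n$ with $\beta$ allowed to depend on $k=c\log n$: actually the right move is to note that Don has $k = c\log n$ arrivals, so the probability all of them miss a \emph{constant} fraction $\delta$ of the positions is at most $(1-\delta)^{c\log n} = n^{-c\log(1/(1-\delta))}$, which is $\le n^{-(\alpha+1)}$ for $c$ large enough. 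Thus taking $m = \delta k n$ (a $\delta$-fraction of all events) handles (i). For (ii), fix a secretary $s \ne \mathrm{Don}$; the number of its arrivals among the first $m$ positions is hypergeometric with mean $\delta k$, and $\Pr[\text{at least } k \text{ of } s\text{'s } k \text{ arrivals are in the prefix}] = \binom{m}{k}/\binom{kn}{k} \le (m/(kn))^k = \delta^{k} = \delta^{c\log n} = n^{-c\log(1/\delta)}$; a union bound over the $n-1$ choices of $s$ gives $\le n^{1-c\log(1/\delta)}$, again $\le n^{-(\alpha+1)}$ for $c$ large. Combining, $\Pr[\mathrm{win}] \ge 1 - \Pr[(i)] - \Pr[(ii)] \ge 1 - n^{-\alpha}$, picking $\delta$ (say $\delta = 1/e$) and then $c = c(\alpha)$ large enough that both exponents exceed $\alpha$ with room for the union-bound factor $n$.

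The main obstacle is getting the sufficient-condition-for-winning argument airtight: I need to verify carefully that ``no secretary appears $k$ times in the prefix before Don's first arrival'' really does force a win — in particular that the candidate can change several times in that prefix without any of those intermediate candidates reaching their $k$-th arrival before Don shows up, and that once Don arrives he is never displaced. This is a monotonicity/bookkeeping argument rather than a probabilistic one, and it is where an off-by-one (does the algorithm hire on the $k$-th arrival, and is the prefix open or closed at Don's position) could creep in; everything else is routine hypergeometric tail estimation plus a union bound, with the parameter $c$ chosen at the end to absorb all the constants.
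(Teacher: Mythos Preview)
Your proof is correct, and your worry about the ``main obstacle'' is misplaced: the sufficient condition is immediate. If no secretary completes all $k$ arrivals before Don's first arrival, then at Don's first arrival the algorithm has not yet hired anyone; Don then becomes and forever remains the candidate, and is hired on his $k$-th arrival. There is no subtle bookkeeping.

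That said, the paper's argument is considerably shorter and avoids your threshold~$m$ entirely. The paper bounds directly, for each secretary $i\neq\mathrm{Don}$, the probability that \emph{all} of $i$'s $k$ arrivals precede \emph{all} of Don's $k$ arrivals. Since this event depends only on the relative order of those $2k$ arrivals (which is a uniformly random permutation of $2k$ symbols), its probability is exactly
\[
\frac{(k!)^2}{(2k)!}=\binom{2k}{k}^{-1}\le 2^{-k}=2^{-c\log n},
\]
and a union bound over the $n-1$ choices of $i$ finishes. Your detour through a position threshold $m=\delta kn$ and the two hypergeometric tail estimates recovers the same $n^{-\Theta(c)}$ bound, but at the cost of an extra parameter and an extra event to control; the paper's observation that one may condition on just the $2k$ arrivals of the pair $(i,\mathrm{Don})$ collapses both of your bad events into a single exact combinatorial count. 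What your approach buys is genericity---it is the kind of argument that would survive if the arrivals were, say, independent uniform times rather than a random permutation---but in the permutation model the paper's direct count is the cleaner route.
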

\begin{proof}
Assume each secretary returns $k= c \log{n}$ times, for $c>0$ to be determined later. 
Denote by $t_a^i \in \{1,2,\ldots,kn\}$ the time of secretary $a$'s $i^{th}$ arrival. Let $X_{a,b}$ be a random variable whose value is $1$ if $\forall i,j t_a^i < t_b^j$ and $0$ otherwise. That is, $X_{a,b}=1$ iff all of the $k$ appearances of secretary $a$ occur before all of the $k$ appearances of secretary $b$.
 We show that for any two secretaries, $a$ and $b$, 
$$Pr[X_{a,b}=1]\leq 1/n^{\alpha}.$$

Take an arbitrary ordering on the $n-2$ secretaries that are not $a$ or $b$ and fix the $2k$ possible positions of $a$ and $b$ relative to this ordering. There are $(2k)!$ possibilities for the appearances of the two secretaries. As the probability of $X_{a,b}$ is independent of this ordering, 
\begin{equation*}
Pr[X_{a,b}] = \frac{((c\log{n})!)^2}{(2c\log{n})!} = {{2c\log{n} \choose c\log{n}}}^{-1} \leq 2^{-c\log{n}}.
\end{equation*}

Denote the best secretary by $s$. Taking a union bound over $Pr[X_{i,s}], i\neq s$, and choosing an appropriate value for $c$, completes the proof.
\end{proof}

\end{document}